\newcommand{\R}{{\mathbb R}}
\theoremstyle{definition}
\newtheorem{definition}{Definition}[section]
\newtheorem{thm}{Theorem}[section]
\newtheorem{lemma}{Lemma}[section]
\newcommand{\blind}{1}
\begin{document}

\def\spacingset#1{\renewcommand{\baselinestretch}%
{#1}\small\normalsize} \spacingset{1}
\bibliographystyle{agsm}
\bibpunct{(}{)}{;}{a}{}{,}

\if1\blind
{
  \title{\bf Local biplots for multi-dimensional scaling, with application to the microbiome}
  \author{Julia Fukuyama\hspace{.2cm}\\
    Department of Statistics, Indiana University Bloomington}
  \maketitle
} \fi

\if0\blind
{
  \bigskip
  \bigskip
  \bigskip
  \begin{center}
    {\LARGE\bf Local biplots for multi-dimensional scaling, with application to the microbiome}
\end{center}
  \medskip
} \fi

\bigskip
\begin{abstract}
    We present local biplots, a an extension of the classic principal components biplot to multi-dimensional scaling.
  Noticing that principal components biplots have an interpretation as the Jacobian of a map from data space to the principal subspace, we define local biplots as the Jacobian of the analogous map for multi-dimensional scaling.  
  In the process, we show a close relationship between our local biplot axes, generalized Euclidean distances, and generalized principal components.
  In simulations and real data we show how local biplots can shed light on what variables or combinations of variables are important for the low-dimensional embedding provided by multi-dimensional scaling.
  They give particular insight into a class of phylogenetically-informed distances commonly used in the analysis of microbiome data, showing that different variants of these distances can be interpreted as implicitly smoothing the data along the phylogenetic tree and that the extent of this smoothing is variable.
\end{abstract}

\noindent%
{\it Keywords:}  generalized eigendecomposition, dimension reduction, microbiome, phylogeny, distance, high-dimensional data
\vfill

\newpage
\spacingset{1.45} % DON'T change the spacing!

\section{Introduction}
\label{sec:intro}

% things to think about: self consistency
% Suppose our data are multivariate normal, or some discrete distribution for the hamming distance case
% We take a sample of size n from this distribution, and make the MDS f : R^p -> R^k
% We have two random variables, X and f(X).
% For them to be self consistent, we need to have E(X | f(X)) = f(X)
% This is clearly nonsense, because X and f(X) live in the same space.
% The analog to self-consistency would be to take a point y \in R^k, and make the set of points \mathcal X \subset R^p such that \forall x \in \mathcal X, f(x) = y.
% We might then be interested in E(x | f(x) = y).
% This would represent a collapsing down of information, and the set of points in R^p corresponding to E(x | f(x) = y), y \in R^k would be a k-dimensional manifold (maybe?) and E(x | f(x) = y) would be self-consistet for X.
% Very unclear that this would be useful, because expectations make the most sense when things work linearly, and the sets of points here can definitely be non-linear, non-convex, etc.

% Context content conclusion

% Motivation for the problem 

Exploratory analysis of a high-dimensional dataset is often performed by defining a distance between the samples and using some form of multi-dimensional scaling (MDS) to obtain a low-dimensional representation of those samples.
The resulting representation of the data can then be used for visualization and hypothesis generation, and the distances themselves can be used for hypothesis testing or other downstream analyses.
However, all forms of multi-dimensional scaling have the limitation that they only provide information about the relationships between the samples.
The analyst would often like to know about the relationships between samples and variables as well: if we see gradients or clusters in the samples, what variables are they associated with?
Do particular regions in the map correspond to particularly high or low values of certain variables?
If principal components analysis (PCA) or a related method is used, these questions are answered naturally with the PCA biplot \citep{Gabriel1971-bt}, but multi-dimensional scaling is silent on these questions.

Related to the question of interpreting a single multi-dimensional scaling plot is the question of interpreting differences between multi-dimensional scaling plots that use different distances to represent the relationships among the same set of samples.
This issue can arise in any problem for which multiple distances are available for the same task, but the work here is motivated by analysis of microbiome data.
In the microbiome field, many distances are available, and multiple distances are often used on the same dataset.
For instance, it is common to pair a distance that uses only presence/absence information with one that uses abundance information and to attribute differences in the representations to the influence of rare species (as in for example \citet{lozupone2007quantitative}).
However, it is often unclear whether such conclusions are warranted, and an analog of the principal components biplot to multi-dimensional scaling would give us much more insight into the differences between the representations of the data given by multi-dimensional scaling with different distances.
% this is fine as far as it goes, but it can be scary, and a more concrete approach to relating the variables to the embedding space is warranted.

% What other people have done 

Other authors have done work on this problem, and the local biplots defined here are most directly related to the non-linear biplots described in \citet{Gower1988-fj}.
These biplots are based on the idea that for any multi-dimensional scaling plot, we can define a function that takes a new data point and adds it to the multi-dimensional scaling embedding space.
Other suggestions for biplots for multi-dimensional scaling have tended to generalize the SVD interpretation of PCA, including \citet{Satten2017-ds} and \citet{wang-2019-gmd-biplot}.
Another simple and commonly-used method for visualizing the relationship between the variables and the embedding space is simply to compute correlations between variables and the sample scores on the embedding axes \citep{mccune2002analysis,daunis2011including}.
All of these methods have their merits, but we believe that the local biplots defined here represent a natural and as yet unexploited way of visualizing the relationship between the data space and the embedding space.

% What we do 

In this paper, we present an interpretation of the PCA biplot axes as the Jacobian of a map from data space to embedding space.
We then generalize this interpretation of the biplot axes, showing how we can obtain analogous axes in classical multi-dimensional scaling.
The resulting visualization tool has a natural interpretation in terms of sensitivities of the embeddings to values of the input variables, gives the analyst a picture of the relationships between the variables and samples and to different parts of the plot, and suggests further diagnostic tools for investigating the importance and linearity of the relationship between the data space and the embedding space defined by classical scaling.
Along the way, we generalize the classic result about the relationship between MDS with the Euclidean distance and principal components to MDS with a generalized Euclidean distance and generalized principal components.

\section{Notation}

We will use the following notation:
\begin{itemize}
\item Upper-case bold symbols (e.g. $\mathbf A$) represent matrices.
\item Lower-case bold symbols (e.g. $\mathbf x$) represent vectors.
\item Lower-case italic symbols represent the elements of a matrix or a vector ($a_{ij}$ is the scalar in the $i$th row, $j$th column of the matrix $\mathbf A$, $x_i$ is the $i$th element of the vector $\mathbf x$).
\item $\mathbf e_j$ is a vector with a 1 in position $j$ and 0's in all other positions.
\item $\mathbf 1_n$ is an $n$-vector filled with 1's.
\item $\mathbf 0_n$ is an $n$-vector filled with 0's.
\item $\mathbf I_n \in \R^{n \times n}$ is the identity matrix.
\item $\mathbf A \succ 0$ indicates that $\mathbf A$ is symmetric positive definite.
\item If $\mathbf A \in \R^{n \times p}$, $\mathbf A_{i \cdot} = \begin{pmatrix} a_{i1} & \cdots & a_{ip}\end{pmatrix}$, and $\mathbf A_{\cdot j} = \begin{pmatrix} a_{1j} \\ \vdots \\ a_{nj}\end{pmatrix}$.
\item $\mathbf A_{\cdot, k:l}$ indicates columns $k$ through $l$ of the matrix $\mathbf A$.
\item $\mathbf A_{k:l, m:n}$ indicates the submatrix of $\mathbf A$ consisting of the $k$ through $l$th rows and the $m$th through $n$th columns.
\item $\mathbf X$ will always denote a data matrix, with samples as rows and variables as columns.
  We will use $\mathbf x_i$ to denote the column vector $\mathbf X_{i\cdot}^T$, that is, the vector corresponding to sample $i$.
\item $\mathbf C_n = \mathbf I_n - \mathbf 1_n \mathbf 1_n^T / n$ is the centering matrix.
\item $d : \R^p \times \R^p \to \R^+$ will be used to denote an arbitrary distance function.
\item $d_{\mathbf y} : \R^p \to \R^+$, where $\mathbf y \in \R^p$ will denote the restriction of $d$ to $\{\mathbf y \} \times \R^p$.
\item $d_{\mathbf A} : \R^p \times \R^p \to \R^+$ where $0 \prec \mathbf A \in \R^{p \times p}$ will be a {\em generalized Euclidean distance}, with $d_{\mathbf A} : \mathbf x, \mathbf y \mapsto \sqrt{(\mathbf x - \mathbf y)^T \mathbf A (\mathbf x - \mathbf y)}$.
  Note that this is more commonly refered to as the Mahalanobis distance \citep{mahalanobis1936generalized}.
  We use ``generalized Euclidean distance'' instead because we do not wish to emphasize the probabilistic interpretation.
\item If $f : \R^m \to \R^n$ and $\mathbf z \in \R^n$, the Jacobian matrix $J_f(\mathbf z) \in \R^{n \times m}$ has elements $(J_f(\mathbf z))_{ij} = \frac{\partial f_i}{\partial z_j}(\mathbf z)$.
\end{itemize}

\section{Methods}
\label{sec:meth}

\subsection{Principal components biplots}

Before introducing local biplots for multi-dimensional scaling, we review principal components biplots.
There are several different types of PCA biplots, and we focus here on the form biplot.
Suppose we have a data matrix $\mathbf X \in \R^{n \times p}$ with centered columns.
Let $\mathbf X = \mathbf U \mathbf D \mathbf V^T$ be the singular value decomposition of $\mathbf X$, so that $\mathbf U \in \R^{n \times \text{rank}(\mathbf X)}$, $\mathbf V \in \R^{n \times \text{rank}(\mathbf X)}$, $\mathbf D \in \R^{\text{rank}(\mathbf X)\times \text{rank}(\mathbf X)}$, $\mathbf U^T \mathbf U = \mathbf V^T \mathbf V = \mathbf I_{\text{rank(X)}}$, and $\mathbf D$ is diagonal with $d_{ii} \ge d_{jj}$ for $i < j$.

In the form biplot, we display the rows of $\mathbf U_{\cdot, 1:k} \mathbf D_{1:k,1:k}$ and the rows of $\mathbf V_{\cdot, 1:k}$, usually with $k = 2$ in Cartesian coordinates.
In that case, the form biplot has the following properties, which are used for interpreting the relationships among the samples, among the variables, and between the variables and the samples:
\begin{itemize}
\item The position of the $i$th row of $\mathbf U_{\cdot, 1:2} \mathbf D_{1:2,1:2}$ along the $x$- and $y$-axes gives the projection of the $i$th row of $\mathbf X$ onto the first and second principal axes, respectively.
\item The position of the $j$th row of $\mathbf V_{\cdot,1:2}$ along the $x$- and $y$-axes gives the loading of the $j$th variable onto the first and second principal axes, respectively.
\item If $\hat{ \mathbf X}$ denotes the least-squares optimal rank-2 approximation of $\mathbf X$, then $\hat x_{ij} = \mathbf U_{\cdot, 1:2} \mathbf D_{1:2,1:2} \mathbf V_{\cdot,1:2}^T$ ($\hat x_{ij}$ is the inner product between the $i$th row of $\mathbf U_{\cdot, 1:2} \mathbf D_{1:2,1:2}$ and the $j$th row of $\mathbf V_{\cdot,1:2}$).
\end{itemize}
The last point in the list above provides the standard motivation behind biplots: if the matrix $\mathbf X$ is well approximated by a rank-2 matrix, then we can read off good approximations of elements of $\mathbf X$ from the biplot.
Even if $\mathbf X$ is not well approximated by a rank-2 matrix, the biplot still allows us to read off exactly (to the extent that we can compute inner products by looking at a plot) its rank-2 approximation.

To this list we add one more property.
Let $g : \R^p \to \R^k$ be the projection of a new data point onto the $k$-dimensional principal subspace, defined as $\mathbf z \mapsto (\mathbf V_{1:k})^T \mathbf z$.
Let $J_g$ be the Jacobian of that map.
$J_g(\mathbf z)$ describes the sensitivity of the projection of a new point $\mathbf z$ on the principal subspace to perturbations in the variables.
Then:
\begin{itemize}
\item For principal components, $J_g(\mathbf z) = J_g = (\mathbf V_{\cdot, 1:k})^T$ (the map is linear and so the Jacobian is constant).
  The $j$th column of $J_g$ gives the $j$th biplot axis in the principal components biplot.
\end{itemize}
This property is close to simply being the sensitivity of the embedding position of a row of $\mathbf X$ to a perturbation in the $j$th variable.
The difference is that we fix the principal axes when perturbing one of the variables.
In a more standard sensitivity formulation, perturbing one of the variables associated with one of the samples would change the principal subspace.
By thinking of the principal subspace as being fixed and defining a map that takes new data points into that space, we can describe what sorts of perturbations of supplemental points in the embedding space would result from perturbations of those points in the input data space.

It is this last property that we propose to generalize to multi-dimensional scaling.
Since multi-dimensional scaling is not a linear map from the data space to the embedding space, most of the biplot interpretations are not available, or they are only available in approximate forms.
The linearity of principal components corresponds to one set of sensitivities to perturbations of the input variables, no matter what the sample being perturbed is.
Since multi-dimensional scaling is non-linear, we will have different sensitivities to perturbations of the input variables in different parts of the data space, which will correspond to different sets of biplot axes in different parts of the space.
This complicates the representation, but it is unavoidable if we do not want to approximate a nonlinear technique by a linear one.

\subsection{Multi-dimensional scaling defines a map from data space to embedding space}

To generalize the Jacobian interpretation of the PCA biplot to classical multi-dimensional scaling, we need to define a function associated with a multi-dimensional scaling solution that maps from data space to embedding space.
We will assume that we have a matrix $\mathbf X \in \R^{n \times p}$, whose $i$th row is denoted $\mathbf x_i$ and a distance function $d: \R^p \times \R^p \to \R^+$.
Recall that in classical multi-dimensional scaling, as described in \citet{Gower1966-hx} and \citet{Torgerson1958-qf}, we start by defining $\mathbf \Delta \in \R^{n\times n}$ as
\begin{align}
 \delta_{ij} = d(\mathbf x_i, \mathbf x_j)^2 \label{Eq:squared-dist}
\end{align}
We then let
\begin{align}
-\frac{1}{2} \mathbf C_n \mathbf \Delta \mathbf C_n^T = \mathbf B \mathbf \Lambda \mathbf B^T \label{Eq:JDJ}
\end{align}
be the eigendecomposition of $-\frac{1}{2} \mathbf C_n \mathbf \Delta \mathbf C_n^T$, i.e., $\mathbf B \in \R^{n \times n}$ is an orthogonal matrix, $\mathbf \Lambda \in \R^{n \times n}$ is diagonal such that $\lambda_{ii} \ge \lambda_{jj}$ if $i > j$.
If the distances are Euclidean embeddable, all the diagonal elements of $\mathbf \Lambda$ will be non-negative, and the Euclidean distance between $\mathbf B_{i\cdot} \mathbf \Lambda^{1/2}$ and $\mathbf B_{j \cdot} \mathbf \Lambda^{1/2}$ will be equal to $d(\mathbf x_i, \mathbf x_j)$.
In classical scaling, for a $k$-dimensional representation of the samples we use the first $k$ columns of $\mathbf B$  (those corresponding to the largest eigenvalues) scaled by the appropriate eigenvalue:
\begin{align}
\mathbf M = \mathbf B_{\cdot, 1:k} \mathbf \Lambda^{1/2}_{1:k, 1:k}\label{Eq:mds-embedding}
\end{align}

Although classical multi-dimensional scaling is usually understood as simply giving a representation of the samples in the embedding space, we can define the necessary function as the one that maps supplemental points (i.e., points that were not used to create the MDS solution) to the embedding space.
To create such a function, we consider the problem of adding a point to the embedding space so as to have the distances in the embedding space match the distances defined by $d$.
The solution to this problem is derived in \citet{Gower1968-sb}.
Let $f: \R^p \to \R^k$ be the function that takes a new point and maps it to the embedding space.
If $\mathbf z \in \R^p$, then
\begin{align}
  f : \mathbf z \mapsto \frac{1}{2} \mathbf \Lambda_{1:k,1:k}^{-1} \mathbf M^T \mathbf a. \label{Eq:mds-map}
\end{align}
where $\mathbf a \in \R^n$ has elements
\begin{align}
  a_i = \left(-\frac{1}{2} \mathbf C_n \mathbf \Delta \mathbf C_n^T\right)_{ii} - d\left(\mathbf x_i, \mathbf z\right)^2\label{Eq:a-def}.
\end{align}

Note that adding a supplemental point in this way is not the same as making a new embedding based on $n+1$ data points: the positions of the original data points in the embedding space remain unchanged.
This is analogous to our interpretation of biplot axes for PCA: the biplot axis for the $j$th variable is the $j$th column of the Jacobian of a map taking a new point to the principal subspace, assuming that the principal subspace is fixed.

\subsection{Local biplot axes for differentiable distances}

Now that we have a function associated with a multi-dimensional scaling solution that maps from data space to embedding space, we are ready to generalize PCA biplots to multi-dimensional scaling.
Recall that the PCA biplot axis for the $j$th variable was given by the $j$th column of $J_g$, where $g$ was the map from data space to embedding space in PCA.
For multi-dimensional scaling, if $f$ is the map defined in (\ref{Eq:mds-map}), $J_f(\mathbf z)$ is a function of $\mathbf z$.
Therefore, we cannot have just one set of biplot axes describing the entire plot.
There will instead be one set of biplot axes for each point in the space of the original variables.

We make the following definition:
\theoremstyle{definition}
\begin{definition}{Local biplot axes.}

  % need to be a little more careful here about how the function has to be differentiable. Since we multiply by the distance, we just have to have the derivative not blow up when we come at it from either side, so e.g. manhattan distance should be ok. Might be easier to define this properly after we make the definition for discontinuous distances.
  
  Let $\mathbf X \in \R^{n \times p}$ be a data matrix, and recall that $\mathbf x_i$, $i = 1,\ldots, n$ are column vectors corresponding to the rows of $\mathbf X$.
  Let $d : \R^p \times \R^p \to \R^+$ be a distance function, and let $f$ be the function, defined in \ref{Eq:mds-map}, that maps supplemental points to the embedding space defined by classical scaling on $(\mathbf X, d)$.
  Denote by $d_{\mathbf y}$ the restriction of $d$ to $\{\mathbf y\} \times \R^p$, so that $d_{\mathbf y}: \{ \mathbf y\} \times \R^p \to \R^+$ is defined by $\mathbf x \mapsto d(\mathbf y, \mathbf x)$, and suppose that $d_{\mathbf x_i}$, $i = 1,\ldots, n$ has partial derivatives.
The {\em local biplot axes} for $\mathbf z \in \R^p$ are given by
\begin{align}
  LB(\mathbf z) &= (J_f(\mathbf z))^T \\
  &= \frac{1}{2} \begin{pmatrix}
    \frac{\partial d_{\mathbf x_1}(\mathbf z)^2}{\partial z_1} & \cdots \frac{\partial d_{\mathbf x_n}(\mathbf z)^2}{\partial z_1}\\
    \vdots &  \vdots \\
    \frac{\partial d_{\mathbf x_1}(\mathbf z)^2}{\partial z_p} & \cdots \frac{\partial d_{\mathbf x_n}(\mathbf z)^2}{\partial z_p}\\
  \end{pmatrix} \mathbf M \mathbf \Lambda_{1:k,1:k}^{-1}\\
  &= \begin{pmatrix}
    d_{\mathbf x_1}(\mathbf z)\frac{ \partial d_{\mathbf x_1}(\mathbf z)}{\partial z_1} & \cdots d_{\mathbf x_n}(\mathbf z)\frac{\partial d_{\mathbf x_n}(\mathbf z)}{\partial z_1}\\
    \vdots &  \vdots \\
    d_{\mathbf x_1}(\mathbf z)\frac{\partial d_{\mathbf x_1}(\mathbf z)}{\partial z_p} & \cdots d_{\mathbf x_n}(\mathbf z)\frac{\partial d_{\mathbf x_n}(\mathbf z)}{\partial z_p}\\
  \end{pmatrix} \mathbf M \mathbf \Lambda_{1:k,1:k}^{-1} \\
  &= \begin{pmatrix}
    \frac{\partial d_{\mathbf x_1}(\mathbf z)}{\partial z_1} & \cdots \frac{\partial d_{\mathbf x_n}(\mathbf z)}{\partial z_1}\\
    \vdots &  \vdots \\
    \frac{\partial d_{\mathbf x_1}(\mathbf z)}{\partial z_p} & \cdots \frac{\partial d_{\mathbf x_n}(\mathbf z)}{\partial z_p}\\
  \end{pmatrix} \text{diag}\left(\left(d_{\mathbf x_1}(\mathbf z), \ldots, d_{\mathbf x_n}(\mathbf z)\right)  \right)
  \mathbf M \mathbf \Lambda_{1:k,1:k}^{-1}
\end{align}
where $\mathbf M$ and $\mathbf \Lambda$ are as defined in equations \ref{Eq:squared-dist}-\ref{Eq:mds-embedding}.
The $j$th row of $LB(\mathbf z)$ is the local biplot axis for the $j$th variable.
\end{definition}
We define $LB(\mathbf z)$ as the transpose of the Jacobian matrix so that $LB(\mathbf z)$ is analogous to the matrix $\mathbf V_{\cdot, 1:k}$ used in the definition of PCA biplots earlier in this section.
We of course could have defined it the other way and taken the local biplot axis for the $j$th variable to be the $j$th column of the Jacobian matrix.

\subsection{Local biplot axes for non-smooth distances}

Although the definition above relies on the existence of the Jacobian and therefore requires the distance to be differentiable, we can modify our definition of local biplot axes to accommodate non-differentiable or discontinuous distances.
If the issue with the distance is that the left and right limits in the definition of the derivative exist but do not match, we use either the left or the right limit and call them negative or positive local biplot axes.
If the issue is that the limits diverge, as they might with a discontinuous distance, we use a discrete approximation of the derivative with discretization $\varepsilon$ that the user specifies and call them the $\varepsilon$-negative or $\varepsilon$-positive local biplot axes.
%In practice, we often deal with distances that are discontinuous functions of their arguments.
%As one example among many, the Hamming distance counts the number of mismatches between elements of a vector.
%For distances of this sort, the Jacobian does not exist, but we can modify our solution to get well-defined axes:
\theoremstyle{definition}
\begin{definition}{Local biplot axes for discontinuous distances.}  
  Let $\mathbf X \in \R^{n \times p}$ be a data matrix whose rows are denoted $\mathbf x_i$, $d : \R^p \times \R^p \to \R^+$ be a distance function, and let $f$ be the function, defined in \ref{Eq:mds-map}, that maps supplemental points to the embedding space defined by classical scaling on $(\mathbf X, d)$, and let $\varepsilon \in \R^+$.
  Denote by $d_{\mathbf y}$ the restriction of $d$ to $\{\mathbf y\} \times \R^p$, so that $d_{\mathbf y}(\mathbf x) = d(\mathbf y, \mathbf x)$.
The {\em $\varepsilon$-positive local biplot axes} for $\mathbf z \in \R^p$ are given by
\begin{align}
  LB^{\varepsilon,+}(\mathbf z) &= \mathbf F^{\varepsilon,+}(\mathbf z) \text{diag}\left(\left(d_{\mathbf x_1}(\mathbf z), \ldots, d_{\mathbf x_n}(\mathbf z)\right) \right) \mathbf M \mathbf \Lambda_{1:k,1:k}^{-1}
  \label{Eq:lb-pos-eps}
\end{align}
where $\mathbf F^{\varepsilon, +}(\mathbf z) \in \R^{p \times n}$, with
\begin{align}
(\mathbf F^{\varepsilon,+}(\mathbf z))_{ji} &= \frac{d_{\mathbf x_i}(\mathbf z + \varepsilon \mathbf e_j) - d_{\mathbf x_i}(\mathbf z)}{\varepsilon}
\end{align}
Analogously, still taking $\varepsilon > 0$, the {\em $\varepsilon$-negative local biplot axes} for $\mathbf z \in \R^p$ are given by
\begin{align}
  LB^{\varepsilon,-}(\mathbf z) &= \mathbf F^{\varepsilon, -}(\mathbf z) \text{diag}\left(\left(d_{\mathbf x_1}(\mathbf z), \ldots, d_{\mathbf x_n}(\mathbf z)\right) \right) \mathbf M \mathbf \Lambda_{1:k,1:k}^{-1}
  \label{Eq:lb-neg-eps}
\end{align}
where $\mathbf F^{\varepsilon, +}(\mathbf z) \in \R^{p \times n}$, with
\begin{align}
(\mathbf F^{\varepsilon,-}(\mathbf z))_{ji} &= \frac{d_{\mathbf x_i}(\mathbf z) - d_{\mathbf x_i}(\mathbf z - \varepsilon \mathbf e_j)}{\varepsilon}
\end{align}

\end{definition}

Although at first read, the $\varepsilon$ might seem to be an unpleasant hack, it has a reasonable interpretation.
In many situations there is a minimum amount by which a variable can be perturbed.
For instance, with count-valued data, a variable can be perturbed by no less than 1.
The interpretation of the local biplot axes in that case is that they describe how a supplemental point would react to perturbation of a variable by the minimum unit.

Note that by definition, if we use (\ref{Eq:lb-pos-eps}) or (\ref{Eq:lb-neg-eps}) on a differentiable distance, we will have $\lim_{\varepsilon \downarrow 0} LB^{\varepsilon, +}(\mathbf z) = \lim_{\varepsilon \downarrow 0} LB^{\varepsilon, -}(\mathbf z) = LB(\mathbf z)$ and so the two definitions are consistent.

Finally, we can have distances for which the relevant derivatives do not exist because although the left and right limits in the definition of the derivative exist, they do not match.
One example is the Manhattan distance $d_M(\mathbf x, \mathbf y) = \sum_{j=1}^p |x_j - y_j|$.
In that case, we can make the definition:
\theoremstyle{definition}
\begin{definition}{Local biplot axes for continuous non-differentiable distances.}  
  Let $\mathbf X \in \R^{n \times p}$ be a data matrix whose rows are denoted $\mathbf x_i$, $d : \R^p \times \R^p \to \R^+$ be a distance function, and let $f$ be the function, defined in \ref{Eq:mds-map}, that maps supplemental points to the embedding space defined by classical scaling on $(\mathbf X, d)$, and let $\delta \in \R^+$.
  Denote by $d_{\mathbf y}$ the restriction of $d$ to $\{\mathbf y\} \times \R^p$, so that $d_{\mathbf y}(\mathbf x) = d(\mathbf y, \mathbf x)$.
  Suppose that $\lim_{\varepsilon \downarrow 0} \frac{d_{\mathbf x_i}(\mathbf z + \varepsilon \mathbf e_j) - d_{\mathbf x_i}(\mathbf z)}{\varepsilon}$ and $\lim_{\varepsilon \downarrow 0} \frac{d_{\mathbf x_i}(\mathbf z) - d_{\mathbf x_i}(\mathbf z - \varepsilon e_j)}{\varepsilon}$ exist for all $i = 1,\ldots, n$, $j = 1,\ldots,p$, $\mathbf z \in \R^p$.

The {\em positive local biplot axes} for $\mathbf z \in \R^p$ are given by
\begin{align}
  LB^{+}(\mathbf z) &= \lim_{\varepsilon \downarrow 0}\mathbf F^{\varepsilon,+}(\mathbf z) \text{diag}\left(\left(d_{\mathbf x_1}(\mathbf z), \ldots, d_{\mathbf x_n}(\mathbf z)\right) \right) \mathbf M \mathbf \Lambda_{1:k,1:k}^{-1}
  \label{Eq:lb-pos}
\end{align}
where $\mathbf F^{\varepsilon, +}(\mathbf z) \in \R^{p \times n}$, with
\begin{align}
(\mathbf F^{\varepsilon,+}(\mathbf z))_{ji} &= \frac{d_{\mathbf x_i}(\mathbf z + \varepsilon \mathbf e_j) - d_{\mathbf x_i}(\mathbf z)}{\varepsilon}
\end{align}
Analogously, the {\em negative local biplot axes} for $\mathbf z \in \R^p$ are given by
\begin{align}
  LB^{-}(\mathbf z) &= \lim_{\varepsilon \downarrow 0}\mathbf F^{\varepsilon, -}(\mathbf z) \text{diag}\left(\left(d_{\mathbf x_1}(\mathbf z), \ldots, d_{\mathbf x_n}(\mathbf z)\right) \right) \mathbf M \mathbf \Lambda_{1:k,1:k}^{-1}
  \label{Eq:lb-neg}
\end{align}
where $\mathbf F^{\varepsilon, +}(\mathbf z) \in \R^{p \times n}$, with
\begin{align}
(\mathbf F^{\varepsilon,-}(\mathbf z))_{ji} &= \frac{d_{\mathbf x_i}(\mathbf z) - d_{\mathbf x_i}(\mathbf z - \varepsilon \mathbf e_j)}{\varepsilon}
\end{align}
As with the definition for discontinuous distances, if we apply this definition to a distance that is differentiable, we will have $LB^+(\mathbf z) = LB^-(\mathbf z) = LB(\mathbf z)$.

\end{definition}

\subsection{Properties of local biplot axes}

We next show some simple properties of local biplot axes that illustrate how they are related to other methods and how they allow us to interpret the MDS embedding space.
Proofs are provided in the Appendix.

\subsubsection{Equivalence of PCA biplot axes and local biplot axes for Euclidean distances}

The first property of local biplot axes has to do with the relationship between the principal components of $\mathbf X$ and the local biplot axes that result when we perform MDS on $\mathbf X$ with the Euclidean distance.

\begin{thm}
  Equivalence of PCA axes and local biplot axes.
  Let $\mathbf X \in \R^{n \times p}$ have centered columns, $d : \R^p \times \R^p \to \R$ be the Euclidean distance function, $d(\mathbf x, \mathbf y) = \left[\sum_{i = 1}^p (x_i - y_i)^2\right]^{1/2}$.
  Let the singular value decomposition of $\mathbf X$ be $\mathbf X = \mathbf U \mathbf \Lambda \mathbf V^T$, where $\mathbf U \in \R^{n \times k}$, $\mathbf \Lambda \in \R^{k \times k}$, $\mathbf V \in \R^{p \times k}$, $\mathbf U^T \mathbf U = \mathbf V^T \mathbf V = \mathbf I_k$.
  %% Language for "up to sign change"
  Then for any $\mathbf z \in \R^p$, the local biplot axes for classical multi-dimensional scaling of $\mathbf X$ with distance $d$ at $\mathbf z$ are given by $LB(\mathbf z) = \mathbf V_{\cdot, 1:k}$.
  \label{Thm:pca-mds}
\end{thm}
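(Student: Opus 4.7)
The plan is to exploit the classical identification of MDS with PCA for Euclidean distances on centered data: I will obtain a closed form for the map $f$ of (\ref{Eq:mds-map}) and read off its Jacobian directly, rather than manipulating the general local-biplot formula coordinate by coordinate.

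First, I would simplify the doubly-centered squared-distance matrix. Writing $\delta_{ij} = \|\mathbf x_i\|^2 + \|\mathbf x_j\|^2 - 2\mathbf x_i^T \mathbf x_j$ in matrix form gives $\mathbf \Delta = \mathbf s \mathbf 1_n^T + \mathbf 1_n \mathbf s^T - 2 \mathbf X \mathbf X^T$ with $s_i = \|\mathbf x_i\|^2$. Since $\mathbf C_n \mathbf 1_n = \mathbf 0$ kills the two rank-one pieces and $\mathbf C_n \mathbf X = \mathbf X$ by the centering hypothesis,
\[
-\tfrac{1}{2}\,\mathbf C_n \mathbf \Delta \mathbf C_n^T \;=\; \mathbf X \mathbf X^T.
\]
Substituting the SVD $\mathbf X = \mathbf U \mathbf \Lambda \mathbf V^T$ gives $\mathbf X \mathbf X^T = \mathbf U \mathbf \Lambda^2 \mathbf U^T$. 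Matching this to the MDS eigendecomposition in (\ref{Eq:JDJ}), I can take the eigenvector block $\mathbf B_{\cdot,1:k} = \mathbf U$ and identify the corresponding block of MDS eigenvalues with $\mathbf \Lambda^2$; then (\ref{Eq:mds-embedding}) yields $\mathbf M = \mathbf U \mathbf \Lambda$.

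Next I evaluate $f$ in closed form. For any $\mathbf z$, $d(\mathbf x_i,\mathbf z)^2 = \|\mathbf x_i\|^2 + \|\mathbf z\|^2 - 2 \mathbf x_i^T \mathbf z$ and the diagonal $(-\tfrac{1}{2}\mathbf C_n \mathbf \Delta \mathbf C_n^T)_{ii} = \|\mathbf x_i\|^2$, so (\ref{Eq:a-def}) gives $\mathbf a = 2 \mathbf X \mathbf z - \|\mathbf z\|^2 \mathbf 1_n$. Using $\mathbf M^T = \mathbf \Lambda \mathbf U^T$ together with the MDS inverse-eigenvalue factor $\mathbf \Lambda^{-2}$,
\[
f(\mathbf z) \;=\; \tfrac{1}{2}\,\mathbf \Lambda^{-2}\mathbf \Lambda \mathbf U^T \bigl(2\mathbf X \mathbf z - \|\mathbf z\|^2 \mathbf 1_n\bigr) \;=\; \mathbf \Lambda^{-1}\mathbf U^T \mathbf X \mathbf z \;-\; \tfrac{1}{2}\|\mathbf z\|^2\,\mathbf \Lambda^{-1}\mathbf U^T \mathbf 1_n.
\]
The second term vanishes because $\mathbf 1_n^T \mathbf X = \mathbf 0^T$ (centering) together with full column rank of $\mathbf \Lambda \mathbf V^T$ forces $\mathbf 1_n^T \mathbf U = \mathbf 0^T$; the first term collapses via $\mathbf U^T \mathbf X = \mathbf \Lambda \mathbf V^T$ to $f(\mathbf z) = \mathbf V^T \mathbf z$.

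Finally, since $f$ is linear in $\mathbf z$, its Jacobian is the constant matrix $J_f(\mathbf z) = \mathbf V^T$, and transposing gives $LB(\mathbf z) = \mathbf V_{\cdot,1:k}$ independent of $\mathbf z$. The main obstacle is really just notational: the symbol $\mathbf \Lambda$ plays two roles (singular values of $\mathbf X$ in the theorem versus MDS eigenvalues in (\ref{Eq:JDJ})), and the powers of $\mathbf \Lambda$ in $\mathbf M = \mathbf U \mathbf \Lambda$ versus $\mathbf \Lambda^{-2}$ must be tracked carefully so no spurious sign or scaling factor creeps into the Jacobian. One also must justify the step $\mathbf 1_n^T \mathbf U = \mathbf 0^T$, which requires the singular values to be nonzero and $\mathbf V$ to have full column rank; after that, the remaining manipulations are routine matrix algebra.
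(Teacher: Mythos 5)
Your proof is correct, but it takes a different route from the paper's. The paper proves this theorem in one line, as the special case $\mathbf Q = \mathbf I_p$ of Theorem \ref{Thm:quad-dist} on generalized Euclidean distances; the real work is done in Lemma \ref{Lem:JDJ} (which shows $-\tfrac{1}{2}\mathbf C_n \mathbf\Delta\mathbf C_n^T = \mathbf X\mathbf Q\mathbf X^T$) and in the proof of Theorem \ref{Thm:quad-dist}, which differentiates the squared distances, substitutes into the definitional formula for $LB(\mathbf z)$, and then verifies that $\mathbf X^T\mathbf B\mathbf\Lambda^{-1/2}$ satisfies the gPCA conditions \eqref{Eq:g-eig-def}. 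You instead give a self-contained computation: your double-centering identity $-\tfrac{1}{2}\mathbf C_n\mathbf\Delta\mathbf C_n^T = \mathbf X\mathbf X^T$ is exactly the $\mathbf Q=\mathbf I$ case of Lemma \ref{Lem:JDJ}, but rather than differentiating the squared distances inside the $LB$ formula you derive the closed form $f(\mathbf z) = \mathbf V^T\mathbf z$ for the supplemental-point map and read off the (constant) Jacobian. Your route buys an elementary, SVD-only argument with an explicit formula for $f$, and it cleanly sidesteps the gPCA machinery; the paper's route buys generality, since one computation covers every $\mathbf Q \succ 0$ and simultaneously establishes the connection to generalized principal axes. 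Your care about the two roles of the symbol $\mathbf\Lambda$ and about justifying $\mathbf 1_n^T\mathbf U = \mathbf 0^T$ (which does require the singular values retained in the compact SVD to be nonzero) is warranted and handled correctly; the same centering argument appears in the paper's proof of Theorem \ref{Thm:quad-dist} in the guise of $\mathbf M$ having centered columns.
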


This theorem tells us that if we perform MDS with the Euclidean distance, the local biplot axes are constant and they are the same as the principal axes.
It can be thought of as dual to the classic result about the relationship between PCA and multi-dimensional scaling with the Euclidean distance in \citet{Gower1966-hx}: the result above is about the variables, and the classic result is about the samples.

\subsubsection{Relationship between gPCA axes and local biplot axes for generalized Euclidean distances}

The relationship between local biplot axes and principal components is not restricted to the standard Euclidean distance.
For any generalized Euclidean distance $d_{\mathbf Q}(\mathbf x,\mathbf y) = \sqrt{(\mathbf x-\mathbf y)^T \mathbf Q(\mathbf x-\mathbf y)}$, with $\mathbf Q \succ 0$, the local biplot axes will be constant on the data space and will be related to the generalized eigendecomposition or generalized principal components analysis (gPCA) of $\mathbf X$.
The interested reader can see \citet{Holmes2008-op} for a review.
Briefly, gPCA is defined on a triple $(\mathbf X, \mathbf Q, \mathbf D)$, where $\mathbf X \in \R^{n \times p}$, $0 \prec \mathbf Q \in \R^{p \times p}$, $0 \prec \mathbf D \in \R^{n \times n}$.
The generalization in gPCA can be interpreted either as generalizing the noise model in the probabilistic formulation of PCA to one in which the errors have a matrix normal distribution with covariance $\mathbf D^{-1} \otimes \mathbf Q^{-1}$ \citep{allen2014generalized}, or generalizing from standard inner product spaces on the rows and columns of $\mathbf X$ to inner product spaces defined by $\mathbf Q$ and $\mathbf D$ \citep{Holmes2008-op}.

The local biplot axes for MDS with $d_{\mathbf Q}$ are related to the generalized principal axes for the triple $(\mathbf X, \mathbf Q, \mathbf I_n)$.
The definition of generalized principal axes is as follows:
\begin{definition}{Generalized principal axes.}
  Consider gPCA of the triple $(\mathbf X, \mathbf Q, \mathbf D)$, with $\mathbf X \in \R^{n \times p}$ , $0 \prec \mathbf Q \in \R^{p \times p}$, $0 \prec \mathbf D \in \R^{n \times n}$.
  Let $\mathbf V \in \R^{p \times k}$ and $\mathbf \Lambda \in \R^{k \times k}$ be matrices satisfying
  \begin{align}
    \mathbf X^T \mathbf D \mathbf X \mathbf Q \mathbf V = \mathbf V \mathbf \Lambda, \quad \mathbf V^T \mathbf Q \mathbf V = \mathbf I\label{Eq:g-eig-def}
  \end{align}
  with $\lambda_{ii} \ge \lambda_{jj}$ iff $i < j$.
  The {\em generalized principal axes} for gPCA on the triple $(\mathbf X, \mathbf Q, \mathbf D)$ are given by $\mathbf V \mathbf \Lambda^{1/2}$, and the {\em normalized generalized principal axes} are simply $\mathbf V$.
  \label{Def:generalized-principal-axes}
\end{definition}

The generalized principal axes/generalized eigenvectors exist and can be understood in terms of the more familiar standard eigendecomposition of $\mathbf X \mathbf Q \mathbf X^T$ as follows:
\begin{thm}
  If the eigendecomposition of $\mathbf X \mathbf Q \mathbf X^T$ is $\mathbf{\tilde V}  \mathbf{\tilde \Lambda} \mathbf{\tilde V}^T$, then $\mathbf V = \mathbf Q^{-1/2} \mathbf{\tilde V}$ and $\mathbf \Lambda = \mathbf {\tilde \Lambda}$ satisfy the equations (\ref{Eq:g-eig-def}).
  \label{Thm:gpca-existence}
\end{thm}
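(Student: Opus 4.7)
The plan is to verify directly that the proposed $\mathbf V$ and $\mathbf \Lambda$ satisfy both defining relations (\ref{Eq:g-eig-def}) of the generalized eigenvalue problem. The key tool is the symmetric positive-definite square root $\mathbf Q^{1/2}$ of $\mathbf Q$, which exists uniquely because $\mathbf Q \succ 0$ (spectral theorem) and whose inverse $\mathbf Q^{-1/2}$ is also symmetric positive definite and satisfies $\mathbf Q^{1/2}\mathbf Q^{-1/2} = \mathbf I_p$ and $\mathbf Q^{1/2}\mathbf Q^{1/2} = \mathbf Q$. This makes the candidate $\mathbf V = \mathbf Q^{-1/2}\mathbf{\tilde V}$ well defined and, more importantly, converts the generalized eigenvalue problem into a standard symmetric one.

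Concretely, I would substitute $\mathbf V = \mathbf Q^{-1/2}\mathbf{\tilde V}$ into $\mathbf X^T \mathbf D \mathbf X \mathbf Q \mathbf V = \mathbf V \mathbf \Lambda$, use $\mathbf Q \mathbf Q^{-1/2} = \mathbf Q^{1/2}$ to simplify the left-hand side, and then left-multiply both sides by $\mathbf Q^{1/2}$. The result is $\mathbf Q^{1/2} \mathbf X^T \mathbf D \mathbf X \mathbf Q^{1/2}\, \mathbf{\tilde V} = \mathbf{\tilde V}\, \mathbf \Lambda$, a standard eigendecomposition of a symmetric positive semi-definite $p \times p$ matrix. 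Since this matrix shares its non-zero eigenvalues with the matrix named in the theorem statement (by the cyclic property of eigenvalues of products), I can read off $\mathbf{\tilde V}$ and $\mathbf{\tilde \Lambda} = \mathbf \Lambda$ from the stated decomposition, which yields the first equation of (\ref{Eq:g-eig-def}).

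The second condition $\mathbf V^T \mathbf Q \mathbf V = \mathbf I$ follows by a one-line computation: $\mathbf V^T \mathbf Q \mathbf V = \mathbf{\tilde V}^T \mathbf Q^{-1/2} \mathbf Q \mathbf Q^{-1/2} \mathbf{\tilde V} = \mathbf{\tilde V}^T \mathbf{\tilde V} = \mathbf I$, using symmetry of $\mathbf Q^{-1/2}$ and orthogonality of the eigenvectors $\mathbf{\tilde V}$ of a symmetric matrix. The ordering condition $\lambda_{ii} \ge \lambda_{jj}$ for $i<j$ is inherited directly from $\mathbf{\tilde \Lambda}$ since $\mathbf \Lambda = \mathbf{\tilde \Lambda}$ by construction. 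I do not foresee substantive obstacles; the argument is purely algebraic verification, and the only care required is the correct manipulation of the symmetric square roots $\mathbf Q^{\pm 1/2}$, whose defining properties all follow from applying the spectral theorem to $\mathbf Q \succ 0$.
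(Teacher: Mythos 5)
Your approach is essentially the paper's: both proofs use the symmetric square root $\mathbf Q^{1/2}$ to convert the generalized eigenproblem into the standard spectral decomposition of the symmetric matrix $\mathbf Q^{1/2}\mathbf X^T\mathbf D\mathbf X\mathbf Q^{1/2}$, and your verification of $\mathbf V^T\mathbf Q\mathbf V=\mathbf I$ is identical to the paper's. The one step that does not hold up is your claim that, because $\mathbf Q^{1/2}\mathbf X^T\mathbf D\mathbf X\mathbf Q^{1/2}$ shares its non-zero eigenvalues with the matrix named in the statement, you can ``read off'' $\mathbf{\tilde V}$ from the stated decomposition: coincidence of eigenvalues between an $n\times n$ and a $p\times p$ matrix tells you nothing about their eigenvectors, and for general $\mathbf D$ the cyclic-permutation argument pairs $\mathbf Q^{1/2}\mathbf X^T\mathbf D\mathbf X\mathbf Q^{1/2}$ with $\mathbf D^{1/2}\mathbf X\mathbf Q\mathbf X^T\mathbf D^{1/2}$, not with $\mathbf X\mathbf Q\mathbf X^T$. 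That said, this step is not actually needed: the spectral theorem applied directly to $\mathbf Q^{1/2}\mathbf X^T\mathbf D\mathbf X\mathbf Q^{1/2}$ supplies the orthonormal $\mathbf{\tilde V}$ and $\mathbf{\tilde\Lambda}$ you use, which is exactly what the paper's proof does --- the theorem statement's reference to $\mathbf X\mathbf Q\mathbf X^T$ appears to be a typo (note $\mathbf Q^{-1/2}\mathbf{\tilde V}$ only makes dimensional sense if $\mathbf{\tilde V}$ is $p\times p$), and the paper's own proof silently substitutes the correct matrix. Dropping the shared-eigenvalue link and invoking the spectral theorem on the symmetric $p\times p$ matrix directly makes your argument complete and coincident with the paper's.
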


Once we have defined generalized eigenvectors/generalized principal axes, we can write down the relationship between the generalized principal axes and the local biplot axes for a generalized Euclidean distance:
\begin{thm}
  Local biplot axes for generalized Euclidean distances.
  Let $\mathbf X \in \R^{n \times p}$, $\mathbf Q \in \R^{p \times p}$, $\mathbf Q \succ 0$, $d_{\mathbf Q} : \R^p \times \R^p \to \R$ be a generalized Euclidean distance so that $d_{\mathbf Q}(\mathbf x, \mathbf y)= [(\mathbf x - \mathbf y)^T \mathbf Q(\mathbf x - \mathbf y)]^{1/2}$.
  Let $\mathbf V$ be the normalized principal axes for gPCA on the triple $(\mathbf X, \mathbf Q, \mathbf I_n)$.
  %% Language for "up to sign change"
  Then the local biplot axes for classical multi-dimensional scaling of $\mathbf X$ with the distance $d_{\mathbf Q}$ at $\mathbf z$ are given by $LB(\mathbf z) = \mathbf Q \mathbf V$.
  \label{Thm:quad-dist}
\end{thm}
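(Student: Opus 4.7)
The plan is to substitute the generalized Euclidean distance into the definition of $LB(\mathbf z)$, eliminate the apparent $\mathbf z$-dependence using the null space of the double-centered matrix, and then recognize the resulting constant matrix as $\mathbf Q \mathbf V$ via the generalized eigenvalue equation.

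First I would compute the gradients. Since $d_{\mathbf Q}(\mathbf x_i, \mathbf z)^2 = (\mathbf x_i - \mathbf z)^T \mathbf Q(\mathbf x_i - \mathbf z)$, the gradient in $\mathbf z$ is $2\mathbf Q(\mathbf z - \mathbf x_i)$, so the matrix of half-gradients of squared distances appearing in the first equivalent form of $LB(\mathbf z)$ equals $\mathbf Q(\mathbf z \mathbf 1_n^T - \mathbf X^T)$. Using $\mathbf M\mathbf \Lambda_{1:k,1:k}^{-1} = \mathbf B_{\cdot,1:k}\mathbf \Lambda_{1:k,1:k}^{-1/2}$, the definition reduces to $LB(\mathbf z) = \mathbf Q(\mathbf z\mathbf 1_n^T - \mathbf X^T)\mathbf B_{\cdot,1:k}\mathbf \Lambda_{1:k,1:k}^{-1/2}$.

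Next I would argue that the $\mathbf z$-dependent term vanishes. Expanding $\delta_{ij}$ and double-centering yields $-\tfrac{1}{2}\mathbf C_n\mathbf \Delta \mathbf C_n^T = \mathbf C_n \mathbf X \mathbf Q \mathbf X^T \mathbf C_n$, since the single-index terms $\mathbf x_i^T \mathbf Q \mathbf x_i$ live in rank-one pieces $\mathbf q\mathbf 1_n^T$ and $\mathbf 1_n \mathbf q^T$ that centering kills. Because $\mathbf C_n \mathbf 1_n = \mathbf 0$, the vector $\mathbf 1_n$ lies in the null space of this matrix, so the first $k$ columns of $\mathbf B$, which correspond to strictly positive eigenvalues, are orthogonal to $\mathbf 1_n$. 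This forces $\mathbf z\mathbf 1_n^T \mathbf B_{\cdot,1:k} = \mathbf 0$ and leaves the constant-in-$\mathbf z$ expression $LB(\mathbf z) = -\mathbf Q \mathbf X^T \mathbf B_{\cdot,1:k} \mathbf \Lambda_{1:k,1:k}^{-1/2}$, paralleling Theorem \ref{Thm:pca-mds}.

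Finally I would show that $\mathbf W := \mathbf X^T \mathbf B_{\cdot,1:k} \mathbf \Lambda_{1:k,1:k}^{-1/2}$ satisfies the two defining conditions of the normalized generalized principal axes in Definition \ref{Def:generalized-principal-axes}. Assuming $\mathbf X$ has centered columns, as in Theorem \ref{Thm:pca-mds}, the preceding identity collapses to $\mathbf B \mathbf \Lambda \mathbf B^T = \mathbf X \mathbf Q \mathbf X^T$. Left-multiplying $\mathbf X \mathbf Q \mathbf X^T \mathbf B_{\cdot,1:k} = \mathbf B_{\cdot,1:k} \mathbf \Lambda_{1:k,1:k}$ by $\mathbf X^T$ gives $\mathbf X^T \mathbf X \mathbf Q \mathbf W = \mathbf W \mathbf \Lambda_{1:k,1:k}$; substituting the same identity into $\mathbf W^T \mathbf Q \mathbf W$ gives $\mathbf \Lambda_{1:k,1:k}^{-1/2}\mathbf \Lambda_{1:k,1:k}\mathbf \Lambda_{1:k,1:k}^{-1/2} = \mathbf I_k$. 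Hence $\mathbf W = \mathbf V$ up to the usual eigenvector sign freedom, which can be absorbed into the choice of $\mathbf V$ to conclude $LB(\mathbf z) = \mathbf Q\mathbf V$. The main obstacle is keeping centering and sign conventions consistent: the gPCA definition does not impose centering, so matching the MDS eigendecomposition to gPCA on $(\mathbf X, \mathbf Q, \mathbf I_n)$ is cleanest under the centered-columns assumption; otherwise $\mathbf X$ must be replaced by $\mathbf C_n \mathbf X$ wherever it appears. Apart from this bookkeeping, the argument is a direct linear-algebra unfolding of the definition, using Theorem \ref{Thm:gpca-existence} only implicitly to guarantee that the generalized eigenvectors we construct are the ones the definition singles out.
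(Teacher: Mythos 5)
Your proposal is correct and takes essentially the same route as the paper's proof: compute $\nabla_{\mathbf z}\, d_{\mathbf Q}(\mathbf x_i,\mathbf z)^2$, kill the $\mathbf z$-dependent term using $\mathbf 1_n^T \mathbf B_{\cdot,1:k} = \mathbf 0$ (the paper phrases this as $\mathbf M$ having centered columns), and verify that $\mathbf X^T \mathbf B_{\cdot,1:k} \mathbf \Lambda_{1:k,1:k}^{-1/2}$ satisfies the defining equations of the normalized gPCA axes via the identity $-\frac{1}{2}\mathbf C_n \mathbf \Delta \mathbf C_n^T = \mathbf X \mathbf Q \mathbf X^T$, which the paper isolates as Lemma \ref{Lem:JDJ} under the centered-columns assumption you rightly flag. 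The overall minus sign you carry (from the correct gradient $2\mathbf Q(\mathbf z - \mathbf x_i)$, where the paper's proof writes $2\mathbf Q(\mathbf x_i - \mathbf z)$) is harmless: it traces to a sign inconsistency between the paper's displayed formula for $LB$ and $(J_f)^T$, and is absorbed into the column-sign freedom of $\mathbf B$ and $\mathbf V$ exactly as you note.
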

Note that standard PCA is gPCA on $(\mathbf X, \mathbf I_p, \mathbf I_n)$, and so Theorem \ref{Thm:pca-mds} is a special case of Theorem \ref{Thm:quad-dist}.

\subsubsection{Conditions under which constant local biplot axes imply generalized Euclidean distance}

After seeing that generalized Euclidean distances imply constant local biplot axes, a natural question is whether the converse holds: if the local biplot axes are constant, must the distance have been a generalized Euclidean distance?
The answer in general is no: given that $LB(\mathbf z) = \mathbf L$ for every $\mathbf z \in \R^p$, and assuming further that adding $\mathbf z$ does not require an additional embedding dimension, we still only have that $d(\mathbf x_i, \mathbf z) = d_{\mathbf L \mathbf L^T}(\mathbf x_i, \mathbf z)$ for $i = 1,\ldots, n$, $\mathbf z \in \R^p$.

\begin{thm}{Constant local biplot axes imply $d$ is a generalized Euclidean distance on $\{ \mathbf x_i \}_{i = 1,\ldots, n} \times \R^p$.}
  Let $\mathbf X \in \R^{n \times p}$, with $n > p$.
  Suppose the local biplot axes associated with the multi-dimensional scaling solution of $\mathbf X$ with the distance $d$ are $LB(\mathbf z) = \mathbf L$, with $\mathbf L \in \R^{p \times p}$ with $\text{rank}(\mathbf L) = p$ for any $\mathbf z \in \R^p$.
  Suppose that for multi-dimensional scaling of $\mathbf X$ with $d$, the full embedding space is of dimension $p$, and further suppose that any supplemental points $\mathbf z \in \R^p$ can be added to the space without requiring an additional embedding dimension.
    Then for any $\mathbf z \in \R^p$ and any $\mathbf x_i$, $i = 1,\ldots, n$ we have $d(\mathbf x_i, \mathbf z) = d_{\mathbf L \mathbf L^T}(\mathbf x_i, \mathbf z)$.
  
  \label{Thm:constant-axes-on-input}
\end{thm}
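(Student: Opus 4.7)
The approach is to use constancy of the Jacobian to deduce that the supplemental-point map $f$ is globally affine, and then to combine this with the hypothesis that the full $p$-dimensional embedding exactly absorbs every supplemental point in order to identify $d(\mathbf x_i, \mathbf z)$ with $d_{\mathbf L \mathbf L^T}(\mathbf x_i, \mathbf z)$.

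First, since $J_f(\mathbf z) = LB(\mathbf z)^T = \mathbf L^T$ does not depend on $\mathbf z$, the map $f : \R^p \to \R^p$ defined in (\ref{Eq:mds-map}) is affine. (Here the assumption that the full embedding dimension is $p$, together with $\mathrm{rank}(\mathbf L) = p$, forces $k = p$.) Hence $f(\mathbf z) = \mathbf L^T \mathbf z + \mathbf c$ for a fixed $\mathbf c \in \R^p$. A short calculation using (\ref{Eq:mds-map})--(\ref{Eq:a-def}), together with the facts that $\mathbf M$ is column-centered (so $\mathbf M^T \mathbf 1_n = \mathbf 0$) and that $\mathbf M^T \mathbf M = \mathbf \Lambda_{1:p,\,1:p}$, shows $f(\mathbf x_i) = \mathbf M_{i\cdot}^T$ for every $i$, so in particular $\mathbf L^T \mathbf x_i + \mathbf c = \mathbf M_{i\cdot}^T$.

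Second, I would argue that the hypothesis ``any supplemental point $\mathbf z \in \R^p$ can be added without requiring an additional embedding dimension'' is exactly the condition under which the Euclidean distances in the embedding reproduce $d$ exactly, i.e., $\|f(\mathbf z) - \mathbf M_{i\cdot}^T\|_2 = d(\mathbf x_i, \mathbf z)$ for all $i$ and $\mathbf z$. Gower's construction of $f$ produces the point in the existing $p$-dimensional embedding whose squared distances to the embedded data points best fit the prescribed squared distances in a least-squares sense; when the augmented distance matrix is itself embeddable in the same $p$ dimensions, this best fit must be exact. Combining the two facts,
\begin{align*}
d(\mathbf x_i, \mathbf z)^2
&= \|f(\mathbf x_i) - f(\mathbf z)\|_2^2
= \|\mathbf L^T (\mathbf x_i - \mathbf z)\|_2^2
= (\mathbf x_i - \mathbf z)^T \mathbf L \mathbf L^T (\mathbf x_i - \mathbf z).
\end{align*}
Since $\mathrm{rank}(\mathbf L) = p$, we have $\mathbf L \mathbf L^T \succ 0$, so the right-hand side equals $d_{\mathbf L \mathbf L^T}(\mathbf x_i, \mathbf z)^2$, and taking square roots gives the claim.

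The main obstacle is the second step: one needs to extract cleanly from the Gower (1968) derivation of $f$ that ``no new embedding dimension needed'' implies exact rather than merely least-squares matching of the distances in the embedding. Once that is in place, the affine-from-constant-Jacobian argument, the identification $f(\mathbf x_i) = \mathbf M_{i\cdot}^T$, and the final algebraic manipulation are all routine. The hypothesis $n > p$ is used only to make the setup consistent (otherwise the full embedding could not have dimension $p$).
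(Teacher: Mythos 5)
Your proposal is correct and follows essentially the same route as the paper's proof: constancy of the Jacobian makes $f$ affine, $f(\mathbf z) = \mathbf L^T \mathbf z + \mathbf k$, and the two embedding-dimension hypotheses are used to assert that Euclidean distances in the embedding exactly reproduce $d(\mathbf x_i, \mathbf z)$, after which the identity $d(\mathbf x_i,\mathbf z) = \sqrt{(\mathbf x_i - \mathbf z)^T \mathbf L \mathbf L^T (\mathbf x_i - \mathbf z)}$ is immediate. The one step you flag as an obstacle (that ``no new dimension needed'' gives exact rather than least-squares matching) is simply taken as the meaning of that assumption in the paper, so your reading is the intended one.
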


The limitation of the theorem above is that it does not apply to arbitrary pairs of points in the embedding space, only to pairs for which one member is one of the initial data points.
To see why constant local biplot axes are not enough to ensure that for any arbitrary points $\mathbf z_1, \mathbf z_2 \in \R^p$, that $d(\mathbf z_1, \mathbf z_2) = d_{\mathbf L \mathbf L^T}(\mathbf z_1, \mathbf z_2)$, consider the distance
\begin{align}
  d_{ET}(x,y)^2 = \begin{cases}
    (x - y)^2 & \text{min}(|x|, |y|) \le 1 \\
    \frac{1}{4}(x - y)^2 & \text{min}(|x|, |y|) > 1
  \end{cases}.
\end{align}
The mnemonic is that ET stands for ``express train'': if we think of the distance as being time along a rail route, trips from anywhere to any point in $[-1,1]$ take time equal to distance, while trips between two points outside of $[-1,1]$ take time equal to half the distance.

If we imagine performing multi-dimensional scaling with $d_{ET}$ and $\mathbf X = \begin{pmatrix} .5 \\ -.5 \end{pmatrix}$, the MDS solution would give us a one-dimensional space with $\mathbf x_1$ embedded at $.5$ and $\mathbf x_2$ embedded at $-.5$ (or vice versa).
Additionally, the local biplot axis (axis singular because there is only one variable in this example) would be constant and equal to 1.
This is because the local biplot axes are defined by $d_{\mathbf x_i}(\mathbf z)$, and since both $\mathbf x_1$ and $\mathbf x_2$ have absolute value less than 1, the distance to be evaluated in the definition of LB axes is always the standard Euclidean distance.
But of course $d_{ET}$ is not the standard Euclidean distance and is indeed not a generalized Euclidean distance at all, and so constant local biplot axes cannot imply that the distance used for MDS was a generalized Euclidean distance.

However, if  we add the assumption that $d$ is homogeneous and translation invariant, constant local biplot axes do imply that $d$ is a generalized Euclidean distance on $\R^p \times \R^p$, not just on $\{ \mathbf x_i \}_{i = 1,\ldots, n} \times \R^p$.
Note that the assumption is not the same as restricting to the set of generalized Euclidean distances, as there any many distances that satisfy those properties that are not generalized Euclidean distances, e.g. the Minkowski distances.
In addition, note that the homogeneity and translation invariance assumptions do not rule out any generalized Euclidean distance as all generalized Euclidean distances are homogeneous and translation invariant.

\begin{thm}{Constant local biplot axes and homogeneous, translation-invariant $d$ imply $d$ is a generalized Euclidean distance on $\R^p \times \R^p$.}
  Suppose that $n \ge p$, $\mathbf X \in \R^{n \times p}$ satisfies $\text{rank}(\mathbf X) = p$, and that the local biplot axes associated with the multi-dimensional scaling solution of $\mathbf X$ with the distance $d$ are $LB(\mathbf z) = \mathbf L$ for $\mathbf L \in \R^{p \times p}$ with $\text{rank}(\mathbf L) = p$ for any $\mathbf z \in \R^p$.
  As in Theorem \ref{Thm:constant-axes-on-input}, assume that the full-dimensional MDS embedding space has dimension $p$, and that for any $\mathbf z \in \R^p$, $\mathbf z$ can be added to the initial embedding space without requiring the addition of any extra dimensions.
  
  Suppose further that $d$ satisfies translation invariance and homogeneity, so that for any $\alpha \in \R$ $\mathbf x, \mathbf y, \mathbf z \in \R^p$, $d(\alpha \mathbf x, \alpha \mathbf y) = |\alpha| d(\mathbf x, \mathbf y)$ and $d(\mathbf x + \mathbf z, \mathbf y + \mathbf z) = d(\mathbf x, \mathbf y)$.
  Then for any $\mathbf z_1, \mathbf z_2 \in \R^p$, $d(\mathbf z_1, \mathbf z_2) = d_{\mathbf L \mathbf L^T}(\mathbf z_1, \mathbf z_2)$.
  \label{Thm:constant-axes-with-dist-assumptions}
\end{thm}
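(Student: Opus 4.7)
The plan is to reduce the statement about arbitrary pairs $(\mathbf{z}_1, \mathbf{z}_2)$ to the special case already handled by Theorem \ref{Thm:constant-axes-on-input}, which guarantees $d(\mathbf{x}_i, \mathbf{z}) = d_{\mathbf{L}\mathbf{L}^T}(\mathbf{x}_i, \mathbf{z})$ for every row $\mathbf{x}_i$ of $\mathbf{X}$ and every $\mathbf{z} \in \R^p$. All of the hypotheses required for that theorem (constant full-rank local biplot axes, full embedding space of dimension $p$, supplemental points requiring no additional dimensions) are assumed here, so it applies directly.

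Next I would exploit translation invariance of $d$ to define $\rho(\mathbf{u}) := d(\mathbf{u}, \mathbf{0})$, so that $d(\mathbf{x}, \mathbf{y}) = \rho(\mathbf{x} - \mathbf{y})$ for every pair of points. The generalized Euclidean distance $d_{\mathbf{L}\mathbf{L}^T}$ is itself translation invariant by the form of its definition, since it is a function of the difference $\mathbf{x} - \mathbf{y}$ only. I would then fix any single data point, say $\mathbf{x}_1$, and invoke Theorem \ref{Thm:constant-axes-on-input} to obtain $\rho(\mathbf{x}_1 - \mathbf{z})^2 = (\mathbf{x}_1 - \mathbf{z})^T \mathbf{L}\mathbf{L}^T (\mathbf{x}_1 - \mathbf{z})$ for every $\mathbf{z} \in \R^p$. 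As $\mathbf{z}$ ranges over $\R^p$, the vector $\mathbf{u} := \mathbf{x}_1 - \mathbf{z}$ also ranges over all of $\R^p$, so this identity extends to $\rho(\mathbf{u})^2 = \mathbf{u}^T \mathbf{L}\mathbf{L}^T \mathbf{u}$ for every $\mathbf{u} \in \R^p$. Setting $\mathbf{u} = \mathbf{z}_1 - \mathbf{z}_2$ gives the desired conclusion $d(\mathbf{z}_1, \mathbf{z}_2) = d_{\mathbf{L}\mathbf{L}^T}(\mathbf{z}_1, \mathbf{z}_2)$.

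The only real obstacle is that Theorem \ref{Thm:constant-axes-on-input} pins down $d$ only on pairs that include one of the original data points $\mathbf{x}_i$; translation invariance is precisely the tool that lets one pivot around a fixed $\mathbf{x}_i$ to cover all of $\R^p \times \R^p$. Notably, homogeneity plays no role in this argument: it is consistent with the conclusion (every generalized Euclidean distance is homogeneous) and serves mainly to make explicit that the hypotheses do not secretly restrict to the generalized Euclidean family, as the paragraph preceding the theorem emphasizes with the Minkowski example. If the authors' proof instead avoided relying on Theorem \ref{Thm:constant-axes-on-input}, one would need to reconstruct the quadratic form structure from scratch using both translation invariance and homogeneity (e.g., a parallelogram-law style argument), but the route via the previous theorem seems substantially shorter.
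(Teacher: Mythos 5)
Your proposal is correct, but it takes a genuinely different and substantially shorter route than the paper. The paper proves the result by induction on $k$, establishing $d = d_{\mathbf L \mathbf L^T}$ on $\text{span}(\mathbf x_1,\ldots,\mathbf x_k) \times \text{span}(\mathbf x_1,\ldots,\mathbf x_k)$: the base case and the induction step each combine translation invariance with homogeneity to rescale and re-center an arbitrary pair until one of its members is a data point, at which point Theorem \ref{Thm:constant-axes-on-input} applies; the hypothesis $\text{rank}(\mathbf X) = p$ is then needed so that the span at $k = n$ is all of $\R^p$. You instead observe that translation invariance alone already does the whole job in one step: writing $d(\mathbf z_1, \mathbf z_2) = d(\mathbf x_1, \mathbf x_1 - \mathbf z_1 + \mathbf z_2)$ puts the first argument at a data point, Theorem \ref{Thm:constant-axes-on-input} evaluates the right-hand side as $d_{\mathbf L \mathbf L^T}(\mathbf x_1, \mathbf x_1 - \mathbf z_1 + \mathbf z_2) = d_{\mathbf L \mathbf L^T}(\mathbf z_1, \mathbf z_2)$, and you are done. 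This works precisely because the previous theorem holds for \emph{all} $\mathbf z \in \R^p$, so pivoting around a single $\mathbf x_1$ already determines $\rho(\mathbf u) = d(\mathbf u, \mathbf 0)$ on all of $\R^p$. What your argument buys is a cleaner proof under weaker hypotheses: homogeneity and $\text{rank}(\mathbf X) = p$ are never used (you correctly flag the former), which shows the theorem remains true with those assumptions dropped. What it gives up is only that the paper's inductive structure makes visible exactly where each assumption about $d$ enters; but since your route needs fewer of them, that is not a loss of rigor, only of motivation for the stated hypotheses.
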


These two results are useful for interpretation of the MDS embedding space in the following way:
In gPCA, the lower-dimensional representation of the samples is obtained by projecting the samples onto the gPCA axes, with the projection being with respect to the $\mathbf Q$ inner product.
If we are performing MDS with a generalized Euclidean distance, we can obtain the axes onto which the samples are being projected with Theorem \ref{Thm:quad-dist}.
In the more common situation where we are performing MDS with an arbitrary distance, if we see that the local biplot axes are approximately constant, Theorems \ref{Thm:quad-dist} and \ref{Thm:constant-axes-on-input} suggest that the distance can be approximated by a generalized Euclidean distance and that the relationship between the data space and the embedding space can be approximated as a projection onto the local biplot axes.

\subsubsection{Supplemental point located at centroid of local biplot axes for generalized Euclidean distance}

Our next result relates the position of a supplemental point in the MDS embedding space to the centroid of the local biplot axes scaled up or down according to the values of the the variables for the supplemental point.
As mentioned in the introduction, our local biplots are inspired by the non-linear biplots in \citet{Gower1988-fj}, and Gower gives a similar interpolation result for his non-linear biplots in the special case of distances that are decomposable by coordinate.
Before proceeding to our result for local biplots and generalized Euclidean distances, we give a restatement of Gower's result.
Our statement of the theorem is in terms of the embeddings of supplemental points corresponding to scaled standard basis vectors instead of non-linear biplot axes, but it amounts to the same thing.
\begin{thm}{Interpolation.}
  Let $\mathbf X \in \R^{n \times p}$, and perform MDS on $(\mathbf X, d)$ with $d$ such that $d^2(\mathbf x, \mathbf y) = \sum_{j=1}^p h(x_j, y_j)$.
  The rows of $\mathbf X$ are denoted $\mathbf x_i$, and the $ij$th elemont of $\mathbf X$ is $x_{ij}$.
  Let $\mathbf z = \sum_{j=1}^p \alpha_j \mathbf e_j$ be a supplemental point, let $\mathbf p_j$ be the embedding of $\alpha_j \mathbf e_j$ in MDS space: $\mathbf p_j := f(\alpha_j \mathbf e_j)$, where $f$ is the map from data space to embedding space as defined in \ref{Eq:mds-map}.
  Let $\mathbf c = \frac{1}{p} \sum_{j=1}^p \mathbf p_j$.
  The coordinates of $f(\mathbf z) = f(\sum_{j=1}^p \alpha_j \mathbf e_j)$ will then be $p \mathbf c - (p-1)f(\mathbf 0_p)$.
\label{Thm:gower-centroid}
\end{thm}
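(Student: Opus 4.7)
The plan is to exploit two facts in combination: first, the map $f$ defined in (\ref{Eq:mds-map}) is affine in the vector $\mathbf a(\mathbf z)$; and second, under the coordinatewise decomposability assumption $d^2(\mathbf x,\mathbf y) = \sum_{j=1}^p h(x_j,y_j)$, the squared distance $d(\mathbf x_i,\mathbf z)^2$ for the composite supplemental point $\mathbf z = \sum_j \alpha_j \mathbf e_j$ can be expressed as a linear combination of the squared distances from $\mathbf x_i$ to the ``axis points'' $\alpha_j \mathbf e_j$ and to the origin $\mathbf 0_p$. Once we push that linear combination through $f$, the theorem falls out immediately.

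Concretely, my first step is to use the decomposability to write
\begin{align*}
d(\mathbf x_i,\mathbf z)^2 &= \sum_{j=1}^p h(x_{ij},\alpha_j), \\
d(\mathbf x_i,\alpha_j\mathbf e_j)^2 &= h(x_{ij},\alpha_j) + \sum_{l\neq j} h(x_{il},0), \\
d(\mathbf x_i,\mathbf 0_p)^2 &= \sum_{l=1}^p h(x_{il},0),
\end{align*}
and then to observe by direct subtraction that
\begin{align*}
d(\mathbf x_i,\mathbf z)^2 = \sum_{j=1}^p d(\mathbf x_i,\alpha_j\mathbf e_j)^2 - (p-1)\, d(\mathbf x_i,\mathbf 0_p)^2.
\end{align*}

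Next, since the $i$th coordinate of $\mathbf a(\mathbf w)$ defined in (\ref{Eq:a-def}) equals $\bigl(-\tfrac12 \mathbf C_n\mathbf\Delta\mathbf C_n^T\bigr)_{ii} - d(\mathbf x_i,\mathbf w)^2$, the ``diagonal'' term is independent of $\mathbf w$, so when I substitute the identity above into $\mathbf a(\mathbf z)$ the constant contributions balance (with coefficient $1 - p + (p-1) = 0$) and I obtain
\begin{align*}
\mathbf a(\mathbf z) = \sum_{j=1}^p \mathbf a(\alpha_j\mathbf e_j) - (p-1)\, \mathbf a(\mathbf 0_p).
\end{align*}
Applying the linear operator $\tfrac12 \mathbf\Lambda_{1:k,1:k}^{-1}\mathbf M^T$ to both sides and recalling $\mathbf p_j = f(\alpha_j\mathbf e_j)$ and $\mathbf c = \tfrac{1}{p}\sum_j \mathbf p_j$ gives
\begin{align*}
f(\mathbf z) = \sum_{j=1}^p \mathbf p_j - (p-1)\, f(\mathbf 0_p) = p\mathbf c - (p-1)\, f(\mathbf 0_p),
\end{align*}
which is the claim.

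The only real point requiring care is verifying the bookkeeping on the constant term from $\bigl(-\tfrac12\mathbf C_n\mathbf\Delta\mathbf C_n^T\bigr)_{ii}$; once one checks that the coefficients in the decomposition sum to $1$, the rest is routine since $f$ depends linearly on $\mathbf a$. There is no need to invoke any special properties of the classical scaling eigendecomposition beyond the formula (\ref{Eq:mds-map}) itself, and the homogeneity/translation-invariance assumptions used elsewhere in the paper are not needed here.
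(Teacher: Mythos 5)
Your proposal is correct and follows essentially the same route as the paper's own proof: the same coordinatewise identity $d(\mathbf x_i,\mathbf z)^2 = \sum_{j=1}^p d(\mathbf x_i,\alpha_j\mathbf e_j)^2 - (p-1)\,d(\mathbf x_i,\mathbf 0_p)^2$, followed by pushing this linear combination through the affine map $f$ via $\mathbf a$. The coefficient bookkeeping on the diagonal term checks out, so nothing further is needed.
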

The points $\mathbf p_j$ are points on Gower's non-linear biplot axes, and the result shows that the embedding of a supplemental point can be found from the centroid $\mathbf c$ of the non-linear biplot axes for each variable represented in the supplemental point.

We have a similar interpolation result for local biplot axes.
In our case, instead of restricting the class of distances to those decomposable by variable, we restrict to generalized Euclidean distances.
\begin{thm}{Interpolation for generalized Euclidean distances.}
  Let $\mathbf Q \succ 0$ and $d : \R^p \times \R^p \to \R^+$ be defined as $d(\mathbf x, \mathbf y) = \left[ (\mathbf x - \mathbf y)^T \mathbf Q (\mathbf x - \mathbf y) \right]^{1/2}$.
  By Theorem \ref{Thm:quad-dist}, we know that $LB(\mathbf z) = \mathbf Q \mathbf V$ for some $\mathbf V \in \R^{p \times k}$ and any $\mathbf z \in \R^p$.
  Let $\mathbf c = \frac{1}{p} \sum_{j=1}^p LB(\mathbf z)_{\cdot, j}^T = \frac{1}{p} \sum_{j=1}^p \alpha_j (\mathbf Q \mathbf V)_{\cdot, j}^T$ be the signed and weighted centroid of the local biplot axes, with weights $\alpha_j$.
  The embedding of the supplemental point $\mathbf z$ will be $f(\mathbf z) = p \mathbf c$.
\label{Thm:lb-centroid}
\end{thm}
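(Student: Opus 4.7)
The plan is to exploit constancy of the local biplot axes to reduce $f$ to an affine map, compute it in closed form, and match the answer against $p\mathbf c$. Throughout I assume $\mathbf X$ has centered columns, which is the standard setting for Theorem \ref{Thm:quad-dist} and is what will make $f(\mathbf 0)=\mathbf 0$; this is harmless for interpolation, since $d$, the MDS output, and $LB(\mathbf z)$ are all translation invariant for a generalized Euclidean distance.

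By Theorem \ref{Thm:quad-dist}, the map $\mathbf z\mapsto LB(\mathbf z)$ is constantly equal to $\mathbf Q\mathbf V$, and since $LB(\mathbf z)=J_f(\mathbf z)^T$ the Jacobian of $f$ is the constant matrix $\mathbf V^T\mathbf Q$, so $f(\mathbf z)=f(\mathbf 0)+\mathbf V^T\mathbf Q\mathbf z$. For the constant term, expand
\[
d(\mathbf x_i,\mathbf z)^2=\mathbf x_i^T\mathbf Q\mathbf x_i-2\mathbf x_i^T\mathbf Q\mathbf z+\mathbf z^T\mathbf Q\mathbf z
\]
and use that, under centering, $-\tfrac12\mathbf C_n\mathbf\Delta\mathbf C_n^T=\mathbf X\mathbf Q\mathbf X^T$, whose $i$th diagonal entry is $\mathbf x_i^T\mathbf Q\mathbf x_i$. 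Substituting $\mathbf z=\mathbf 0$ into (\ref{Eq:a-def}) then gives $a_i(\mathbf 0)=0$ for every $i$, so $f(\mathbf 0)=\tfrac12\mathbf\Lambda_{1:k,1:k}^{-1}\mathbf M^T\mathbf 0=\mathbf 0$.

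To finish, write $\mathbf z=\sum_{j=1}^p\alpha_j\mathbf e_j$ and read $\mathbf c$ as the $\alpha_j$-weighted average of the $p$ rows of $LB(\mathbf z)=\mathbf Q\mathbf V$ (the rows, rather than columns, are the vectors that live in the $k$-dimensional embedding space). Then
\[
p\mathbf c=\sum_{j=1}^p\alpha_j(\mathbf Q\mathbf V)_{j,\cdot}^T=(\mathbf Q\mathbf V)^T\mathbf z=\mathbf V^T\mathbf Q\mathbf z,
\]
which matches the expression for $f(\mathbf z)$ from the previous paragraph. The one genuinely non-routine point is the cancellation of the $\mathbf z^T\mathbf Q\mathbf z$ piece of $d^2$: because it is constant in $i$, it is annihilated by $\mathbf M^T$ via $\mathbf M^T\mathbf 1_n=\mathbf 0$ (the columns of $\mathbf M$ are eigenvectors of $\mathbf X\mathbf Q\mathbf X^T$ corresponding to nonzero eigenvalues, hence orthogonal to $\mathbf 1_n$), and this same cancellation is what underpins the affine form of $f$ in the first place. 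The other small judgement call is the notational reading of $\mathbf c$: the statement appears to have a row/column transpose typo, and I take the natural interpretation in which the $p$ local biplot axes (one per variable, i.e.\ rows of $LB(\mathbf z)$) are averaged with weights $\alpha_j$.
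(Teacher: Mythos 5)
Your proof is correct and follows the same overall skeleton as the paper's: both reduce the claim to showing that $f$ is affine with linear part $\mathbf V^T \mathbf Q$ and that the constant term vanishes, and both then read off $p\mathbf c = \mathbf V^T\mathbf Q\mathbf z$ from $\mathbf z = \sum_j \alpha_j \mathbf e_j$ (the paper's proof confirms your reading of the row/column indexing in the statement of $\mathbf c$). Where you diverge is in how the constant is killed. You evaluate $f(\mathbf 0_p)$ directly: by Lemma \ref{Lem:JDJ} the diagonal entries of $-\tfrac12 \mathbf C_n\mathbf\Delta\mathbf C_n^T$ are $\mathbf x_i^T\mathbf Q\mathbf x_i = d(\mathbf x_i,\mathbf 0_p)^2$, so $\mathbf a(\mathbf 0_p)=\mathbf 0_n$ and $f(\mathbf 0_p)=\mathbf 0_k$. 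The paper instead writes $f(\mathbf z)=\mathbf V^T\mathbf Q\mathbf z+\mathbf k$, sums over the original data points to get $\sum_i f(\mathbf x_i)=n\mathbf k$ (using the column-centering of $\mathbf X$), and separately shows $\sum_i f(\mathbf x_i)=\mathbf\Lambda^{1/2}_{1:k,1:k}(\mathbf B_{\cdot,1:k})^T\mathbf 1_n=\mathbf 0_k$ via $\mathbf 1_n^T(-\tfrac12\mathbf C_n\mathbf\Delta\mathbf C_n^T)\mathbf 1_n=0$. Both arguments ultimately rest on the same two facts you isolate — column-centering of $\mathbf X$ and $\mathbf M^T\mathbf 1_n=\mathbf 0_k$ — but your direct computation of $f(\mathbf 0_p)$ is somewhat more economical, and your explicit remark that the $\mathbf z^T\mathbf Q\mathbf z$ term is annihilated by $\mathbf M^T\mathbf 1_n=\mathbf 0_k$ makes the affine form of $f$ concrete rather than inferred from constancy of the Jacobian alone. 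No gaps.
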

As before, we expect this result to hold approximately when the local biplot axes are approximately constant and the distance can be interpreted as approximately a generalized Euclidean distance.

%% \subsubsection{Local biplots as a matrix approximation}

%% Finally, we show that we can interpret the local biplot axes as giving us the solution to a matrix approximation problem in the special case of generalized Euclidean distances.

%% other things: measuring linearity: overall, in some variables but not others? in different regions of the space?

\section{Results}

To illustrate the value of local biplots and to show what sorts of insights they can provide, we present local biplots on real and simulated datasets.

\subsection{Simulated data for phylogenetic distances}

To show how local biplots can help us interpret an MDS embedding of a set of samples, we create MDS embeddings and the associated local biplots for two different distances on a single dataset.
To make the comparison as straightforward as possible, we set up the simulated data so that the MDS embeddings of the samples with the different distances are approximately the same.
We will see that despite the similarity of the sample embeddings, the features used to create the embeddings are quite different for the two distances.

The distances we will use for the comparison are the Manhattan distance and weighted UniFrac \citep{lozupone2007quantitative}.
Recall that the Manhattan distance between points $\mathbf x, \mathbf y \in \R^p$ is $d_M(\mathbf x, \mathbf y) = \sum_{i=1}^p | x_i - y_i|$.
Because the Manhattan distance is continuous but is not differentiable everywhere, we use the positive local biplot axes, as defined in (\ref{Eq:lb-pos}).
The results are qualitatively similar for the negative local biplot axes.
There is a slight difference in that for the negative local biplot axes, many were exactly the same, which has to do with the fact that the data are bounded below at 0 and many data points lie exactly on that boundary.

Weighted UniFrac is a phylogenetically-informed distance commonly used to quantify the dissimilarity between bacterial communities.
If we have a phylogenetic tree $\mathcal T$ with $B$ branches, $p$ tips labeled $1,\ldots, p$ and vectors $\mathbf x, \mathbf y \in \R^p$, then the weighted UniFrac distance between $\mathbf x$ and $\mathbf y$ is $d_w(\mathbf x,\mathbf y) = \sum_{b=1}^B l_b \left | \sum_{i \in \text{desc}(b, \mathcal T)} x_i / \|\mathbf x \|_1 - \sum_{i \in \text{desc}(b, \mathcal T)} y_i / \|\mathbf y \|_1 \right|$ where $\text{desc}(b, \mathcal T) = \{i : \text{ tip $i$ descends from branch $b$} \}$ and $l_b$ is the length of branch $b$.

The idea behind the weighted UniFrac distance is that it takes into account similarity between the variables as measured by proximity on $\mathcal T$.
It does not treat all of the variables as equally dissimilar the way more standard distances do.
Weighted UniFrac is also the solution to an optimal transport problem \citep{evans2012phylogenetic}.

So that we can perform MDS with weighted UniFrac, we create a phylogenetic tree $\mathcal T$ with $p$ leaves and a matrix $\mathbf X \in \R^{n \times p}$.
$\mathcal T$ is a balanced phylogenetic tree with $p$ leaves, labeled $1,\ldots, p$.
Let $\mathbf 1_{shallow} \in \{0,1\}^p$ be such that if $i$ and $j$ represent sister taxa in $\mathcal T$, exactly one of $(\mathbf 1_{shallow})_{i}, (\mathbf 1_{shallow})_j$ is equal to 1.
Call the two children of the root of $\mathcal T$ $l$ and $r$.
Let $\mathbf 1_{deep}\in \{0,1\}^p$ be such that $(\mathbf 1_{deep})_i = 1$ if taxon $i$ descends from $l$ and 0 otherwise.
Let $\mathbf 1_a = (\mathbf 1_{n/2}^T, \mathbf 0_{n/2}^T)$.
We then define
\begin{align}
  \mathbf A = &c_1 (\mathbf 1_a \mathbf 1_{shallow}^T + (\mathbf 1_n - \mathbf 1_a) (\mathbf 1_p - \mathbf 1_{shallow})^T)\, \odot \\
  \quad&\exp(c_2 (\mathbf 1_a \mathbf 1_{deep}^T + (\mathbf 1_n - \mathbf 1_a) (\mathbf 1_p - \mathbf 1_{deep})^T)) \\
  x_{ij} \sim &\text{Double Pois}(a_{ij}, s)
\end{align}
where $c_1, c_2$ are constants, $\mathbf 1_n$ indicates the $n$-vector containing all 1's, $\mathbf 1_{shallow}$ and $\mathbf 1_{deep}$ are as defined above, $\exp$ applied to a vector indicates the element-wise operation, and $\odot$ indicates the element-wise product.
%% Another way to write the mean matrix is
%% \begin{align}
%%   a_{ij} &=
%%   \begin{cases}
%%     c_1 e^{c_2} & (\mathbf 1_a)_i = 1, (\mathbf 1_{shallow})_j = 1, (\mathbf 1_{deep})_j = 1\\
%%     c_1& (\mathbf 1_a)_i = 1, (\mathbf 1_{shallow})_j = 1, (\mathbf 1_{deep})_j = 0\\
%%     c_1 & (\mathbf 1_a)_i = 0, (\mathbf 1_{shallow})_j = 0, (\mathbf 1_{deep})_j = 1\\
%%     c_1 e^{c_2}& (\mathbf 1_a)_i = 0, (\mathbf 1_{shallow})_j = 0, (\mathbf 1_{deep})_j = 0\\
%%     0 & (\mathbf 1_a)_i = 1, (\mathbf 1_{shallow})_j = 0, (\mathbf 1_{deep})_j = 1\\
%%     0 & (\mathbf 1_a)_i = 1, (\mathbf 1_{shallow})_j = 0, (\mathbf 1_{deep})_j = 0\\
%%     0 & (\mathbf 1_a)_i = 0, (\mathbf 1_{shallow})_j = 1, (\mathbf 1_{deep})_j = 1\\
%%     0 & (\mathbf 1_a)_i = 0, (\mathbf 1_{shallow})_j = 1, (\mathbf 1_{deep})_j = 0\\
%%   \end{cases}
%% \end{align}

The idea behind this setup is that there are two groups of samples, those for which $(\mathbf 1_a)_i = 1$ and those for which $(\mathbf 1_a)_i = 0$.
There are are two differences between these groups: the first is a mass difference from one half of the tree to the other, and the other an exclusion effect, in which for each pair of sister taxa, if one is present the other is absent.
These two effects can be seen in Figure \ref{Fig:simulation-setup}, which provides a visualization of the tree $\mathcal T$ and the data matrix $\mathbf X$.

\begin{figure}
  \begin{center}
    \includegraphics[width=\textwidth]{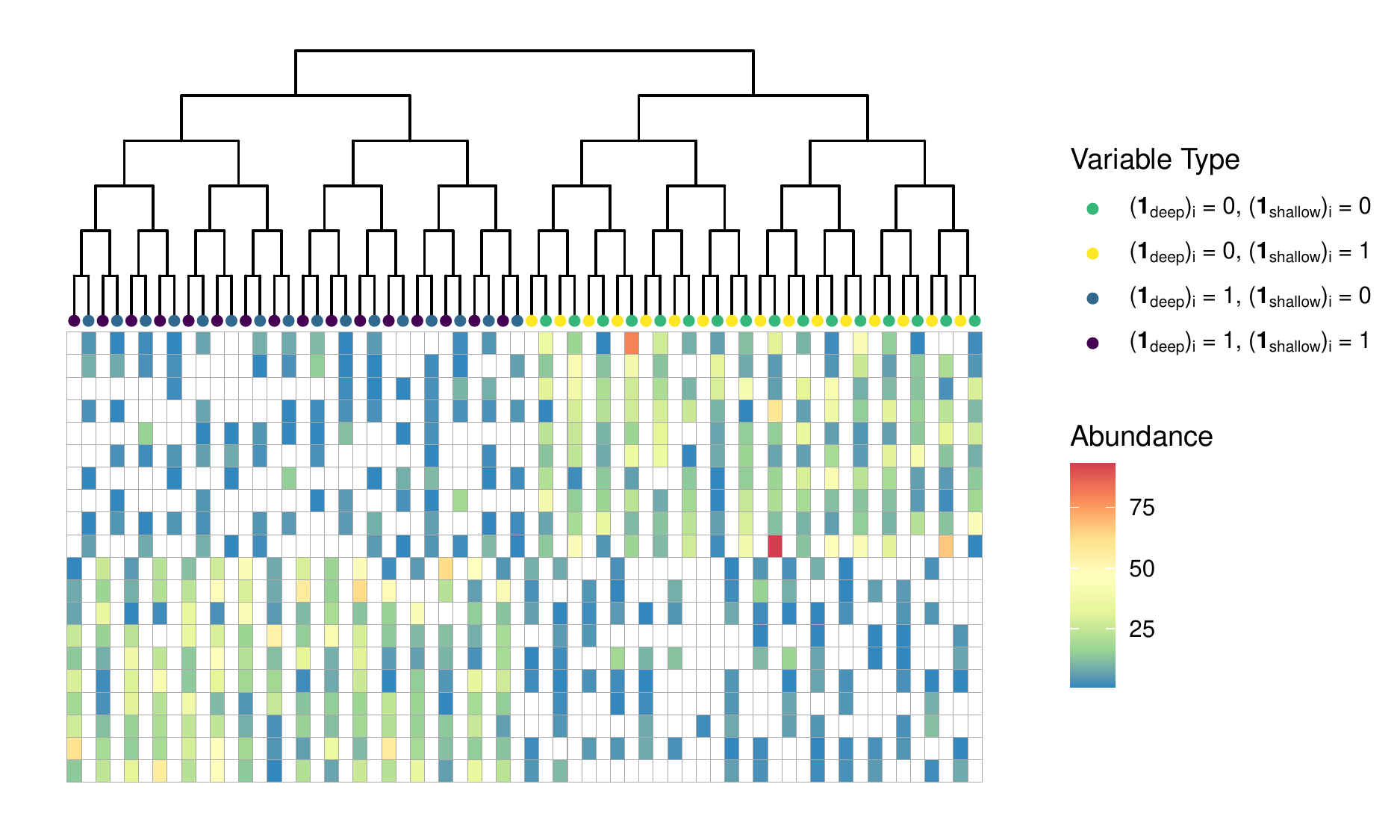}
  \end{center}
  \caption{Phylogenetic tree (top) and simulated data (bottom).
    White boxes indicate 0 abundance in the data matrix $\mathbf X$, and other colors from blue to red indicate low to high non-zero abundances.
    Samples are encoded in rows, and variables in the columns.
    Each variable corresponds to one tip of the tree, and the colored dots at the tips of the tree represent the variables.
    The variable colors are carried through to Figure \ref{Fig:biplot-simulation}.
    Note that the first group of ten samples in the top rows has a higher abundance of variables on the right-hand side of the tree than the left-hand side of the tree, while the reverse is true for the second group of the ten samples.
  In addition, the first group of ten samples contains only representatives of the blue and green variables, while the second group of ten samples contains only representatives of the purple and yellow variables.}
  \label{Fig:simulation-setup}
\end{figure}

Our first step is to create MDS embeddings of the samples in the rows of $\mathbf X$ using either the Manhattan distance or weighted UniFrac.
The results of that embedding are shown in Figure \ref{Fig:mds-simulation}, and as expected, the MDS representation of the samples separates the samples for which $(\mathbf 1_a)_i = 0$ from those for which $(\mathbf 1_a)_i = 1$ along the first axis.
Without local biplot axes, this is as far as MDS takes us: we see that in each case, the MDS embedding clusters the samples into the same two groups.
We know in principle that weighted UniFrac uses the tree and the Manhattan distance does not, but the MDS plot on its own provides no insight into the features of the data that go into the MDS embeddings.

\begin{figure}
  \includegraphics[width=\textwidth]{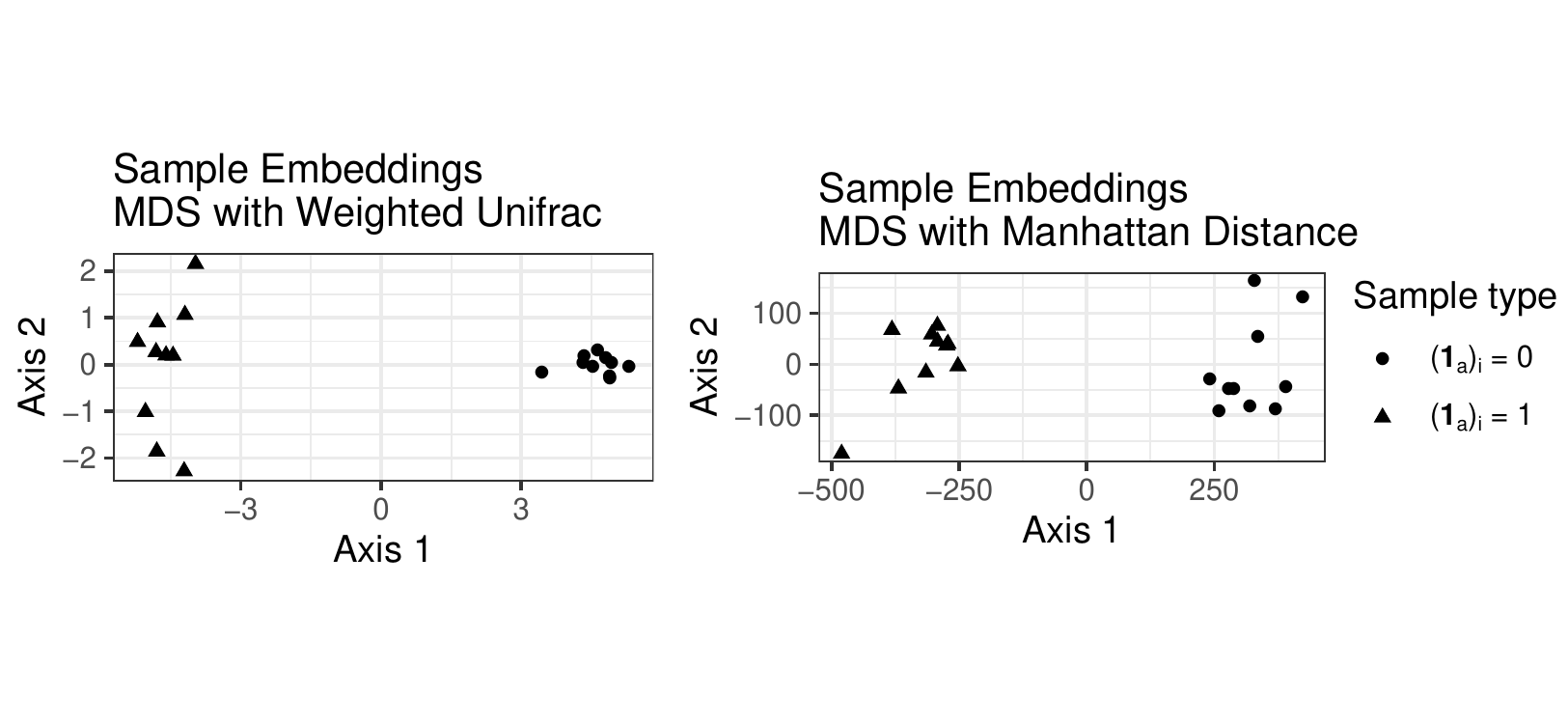}
  \caption{Classical multi-dimensional scaling representation of the samples in the simulated dataset using the weighted UniFrac distance (left) and Manhattan distance (right).
    Each point represents a sample, and shape represents which group the sample came from.
    The representations are not identical, but in each case MDS gives two clusters corresponding to the two groups we simulated from.}
  \label{Fig:mds-simulation}
\end{figure}

Once we have the MDS embeddings for weighted UniFrac and for the Manhattan distance on this dataset, we can add the local biplot axes to visualize how the variables relate to the two MDS embedding spaces.
We start with the local biplot axes for the Manhattan distance, which are shown in the bottom panel of Figure \ref{Fig:biplot-simulation}.
The LB axes were computed at the sample points.
We see that for the Manhattan distance, the variables with positive values for the first local biplot axes are those for which $(\mathbf 1_{shallow})_i = 1$, and the variables with negative values for the first local biplot axes are those for which $(\mathbf 1_{shallow})_i = 0$.
The local biplot axes are approximately the same everywhere in the space, and so we can interpret the MDS/Manhattan distance embedding along the first axis as being approximately a projection onto a vector describing the contrast between variables with values of 0 vs. 1 for $\mathbf 1_{shallow}$.

\begin{figure}
  \includegraphics[width=\textwidth]{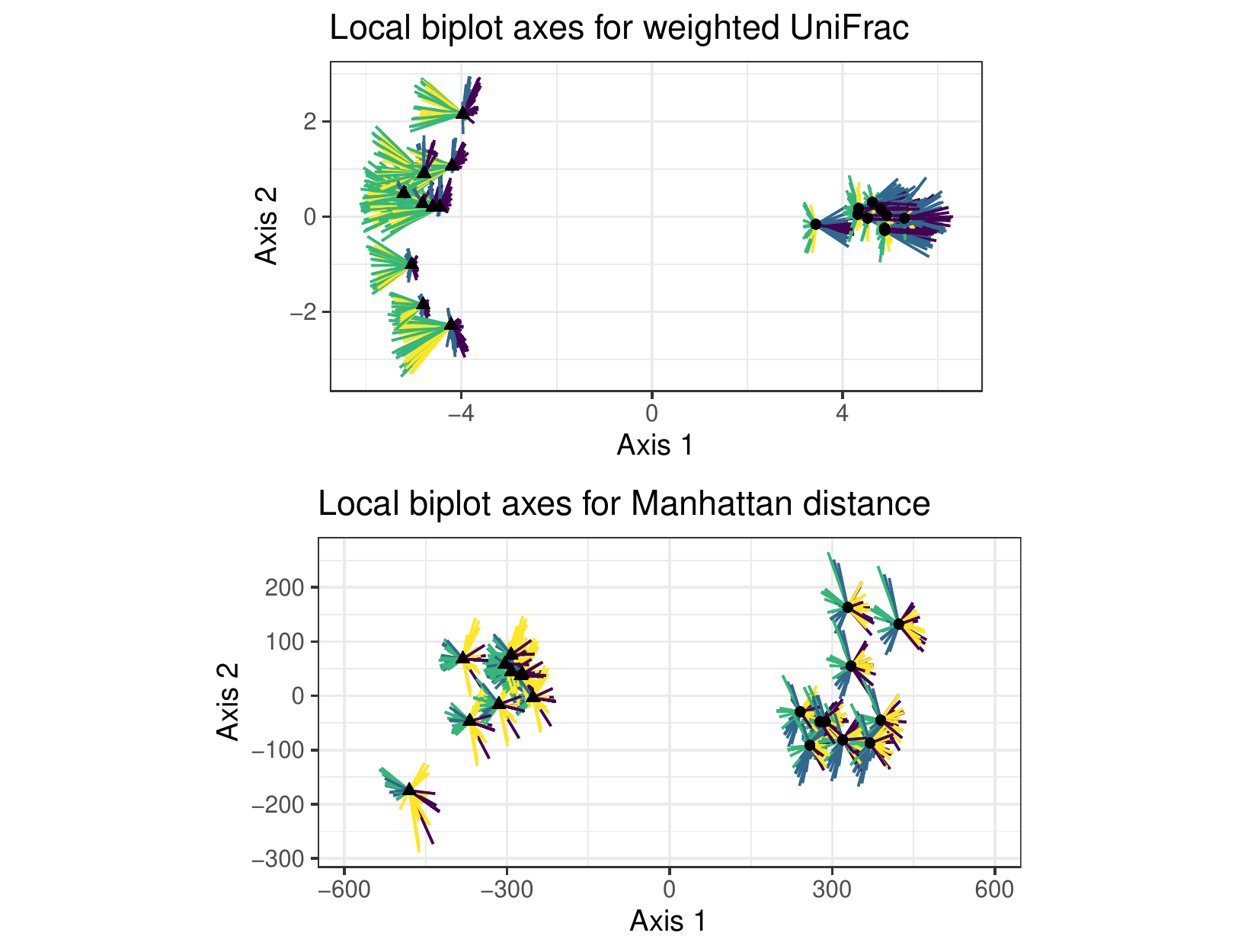}
  \caption{Local biplot axes for MDS/weighted UniFrac (top) and MDS/Manhattan distance (bottom).
    Black circles or triangles represent the sample embeddings.
    A segment connected to a sample point represents a local biplot axis for one variable at that sample point.
    Color represents variable type, and matches the colors of the points on the tips of the trees in Figure \ref{Fig:simulation-setup}.
    Blue/purple vs. yellow/green represent variables corresponding to $(\mathbf 1_{deep})_j = 0$ vs. $(\mathbf 1_{deep})_j = 1$, and blue/green vs. purple/yellow represent $(\mathbf 1_{shallow})_j = 0$ vs $(\mathbf 1_{shallow})_j = 1$.
    Note that the LB axes for weighted UniFrac have green/yellow axes with negative values on the first axis and blue/purple axes with positive values on the first axis.
    The situation is different for the Manhattan distance: there blue/green axes have negative first axis values, and yellow/purple have positive first axis values.
  }
  \label{Fig:biplot-simulation}
\end{figure}

We then turn to the local biplot axes for MDS with the weighted UniFrac distance to visualize the relationship between the variables and the MDS embedding space for weighted UniFrac.
The local biplot axes plotted were again computed at the sample points and are shown in Figure \ref{Fig:biplot-simulation}.
In contrast to the local biplot axes for MDS with the Manhattan distance, the variables with positive values on the first local biplot axes are those for which $(\mathbf 1_{deep})_i = 1$. The variables with negative values on the first local biplot axes are those for which $(\mathbf 1_{deep})_i = 0$.
The LB axes here are more variable than the LB axes for the Manhattan distance, but the signs of the first LB axes are consistent, and so we can interpret the first axis as being approximately a projection onto a vector describing the contrast between variables with value of 0 vs. 1 for $\mathbf 1_{deep}$.

To see how different the local biplot axes are from other proposals for explaining the MDS embedding space, we create correlation biplots for MDS with weighted UniFrac and MDS with the Manhattan distance.
The correlation biplot axes for these two embeddings are shown in the left and right panels of Figure \ref{Fig:correlation-biplot}, respectively.
In contrast to the local biplot axes for these two embeddings, the correlation biplot axes for weighted UniFrac and the Manhattan distance are very similar to each other.
In each case, the variables for which $(\mathbf 1_{shallow})_i = 0$ have positive values, while those for which $(\mathbf 1_{shallow})_i = 1$ have negative values.
Although this is one explanation of the differences between the groups, it is of course not the only one, and in particular it is not the one that is actually used by weighted UniFrac.
The reason the two sets of correlation biplot axes are so similar is that given the embeddings, the correlation biplot doesn't know anything about the distance used, and knows nothing about the relationship between the original data space and the embedding space.
In our example, the embeddings with weighted UniFrac and the Manhattan distance were very similar, and so the correlation biplot axes were also similar.

\begin{figure}
  \includegraphics[width=\textwidth]{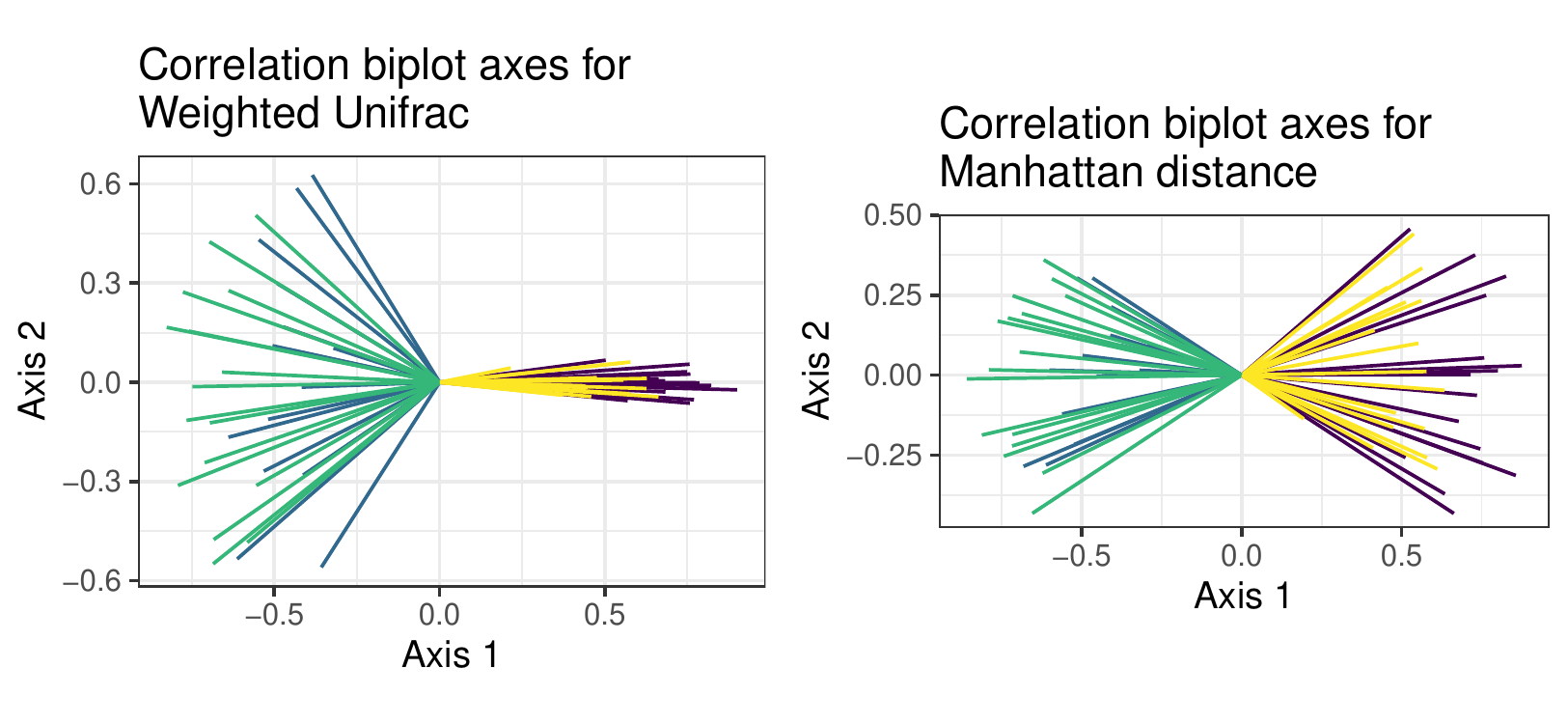}
  \caption{Correlation biplots for weighted UniFrac (left) and Manhattan distance (right).
    Each segment is a biplot axis.
    Color represents variable type, and matches the colors of the points on the tips of the trees in Figure \ref{Fig:simulation-setup}.
    Blue/purple vs. yellow/green represent variables corresponding to $(\mathbf 1_{deep})_j = 0$ vs. $(\mathbf 1_{deep})_j = 1$, and blue/green vs. purple/yellow represent $(\mathbf 1_{shallow})_j = 0$ vs $(\mathbf 1_{shallow})_j = 1$.
    Note the similarity of the correlation biplot axes for the two MDS representations, in contrast to the LB axes in Figure \ref{Fig:biplot-simulation}, which are quite different for weighted UniFrac vs. the Manhattan distance.
   }
  \label{Fig:correlation-biplot}
\end{figure}

In this simulation, and generally when we have $p > n$, there is more than one  way to separate two groups of samples.
The local biplot representation in this simulation shows how two different distances implicitly use different sets of features to separate the two groups: relative abundance of the two halves of the tree in for weighted UniFrac, and relative abundance of alternating sets of leaves on the tree for the Manhattan distance.
This is information that we might have expected based on the known properties of the distances (weighted UniFrac uses the tree and Manhattan distance doesn't), but that is not directly available otherwise.
We see that the local biplot axes give us much more insight into how the variables are used than the correlation biplot axes do.

% TODO: move this to the discussion
This sort of information about how the variables relate to the MDS embedding space is particularly important if the distance was designed to incorporate the analyst's intuition about important features of the data for the particular problem.
In the case of weighted UniFrac, the intuition is that the tree provides important information, and an explanation of the difference between the two groups as being over- or under-represented in one half of the tree is more useful than an explanation of the difference between the two groups as being the presence or absence of half of the species.
Of course, the opposite could also be true: we could be more interested in the exclusion effect and try to design a distance that emphasizes that instead of the deep splits in the tree.
The local biplot axes would again be more useful than correlation biplot axes in that situation, because they describe the relationship between the data space and the embedding space instead of simply describing marginal relationships between the variables and the embeddings.

\subsection{Real data for phylogenetic distances}

To illustrate the utility of local biplots on a real dataset, we analyze data from a study of the effect of antibiotics on the gut microbiome initially described in \citet{dethlefsen2011incomplete}.
In this study, three individuals were given two courses of the antibiotic Ciprofloxacin, and stool samples were taken in the days and weeks before, during, and after each course of the antibiotic.
The composition of the bacterial communities in each of these samples was analyzed using 16S rRNA sequencing \citep{davidson2018microbiome}, and the resulting dataset describes the abundances of 1651 bacterial taxa in each of 162 samples (between 52 and 56 samples were taken per individual).
In addition, the taxa were mapped to a reference phylogenetic tree \citep{quast2013silva}, giving us the evolutionary relationships among all the taxa in the study.

In the initial analysis of this dataset, multi-dimensional scaling with the unweighted UniFrac distance \citep{lozupone2005unifrac} was used to visualize the samples.
Unweighted UniFrac is a phylogenetically-aware distance that only takes into account the presence or absence of a bacterial taxon.
In our reanalysis, we add weighted UniFrac (a phylogenetically-aware distance that takes into account abundance instead of presence/absence), two generalized Euclidean distances that use the phylogeny (the first based on double principal coordinates analysis (DPCoA) \citep{pavoine2004dissimilarities} and the second related to adaptive gPCA \citep{fukuyama2019adaptive}), and the standard Euclidean distance.

\begin{figure}
  \includegraphics[width=\textwidth]{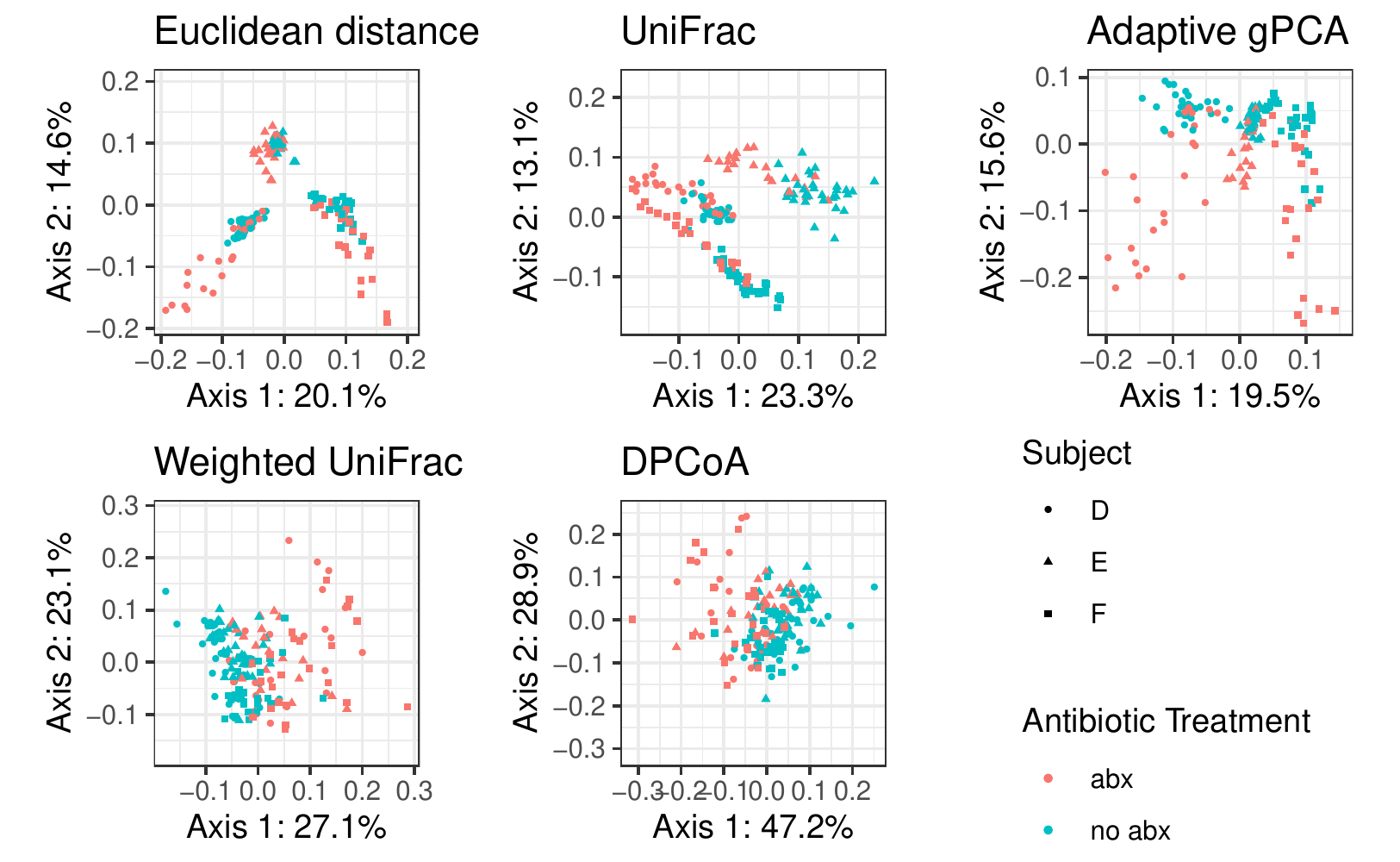}
  \caption{MDS plots of the microbiome data with five different distances. From top left: PCA or Euclidean distance, unweighted UniFrac, adaptive gPCA (a generalized Euclidean distance), weighted UniFrac, and DPCoA (a generalized Euclidean distance).
  The different distances give different representations of the samples, with different amounts of clustering by subject and by abx/no abx condition.}
  \label{Fig:all-samples-real}
\end{figure}

In Figure \ref{Fig:all-samples-real}, we see that each of the five distances gives a different representation of the samples.
Each panel in the figure shows the MDS embeddings of the samples with a different distance.
Each point represents one sample, with shape representing the subject and color related to whether the subject was on the antibiotic or not.
``Abx'' refers to samples taken  when the subject was taking the antibiotic and the week immediately after, while ``no abx'' refers to samples taken before the first course of the antibiotic, at least a week after the first course of the antibiotic but before the second course, and at least a week after the second course of the antibiotic.
With the Euclidean distance, there is a strong clustering of the samples by individual in the principal plane.
The abx/no abx conditions are offset from each other within the cluster for each individual, but the direction of the offset is different for the different subjects.
With unweighted UniFrac, there is still clustering of the samples by subjects, but the offset within each subject associated with the abx/no abx condition is now in a consistent direction for each subject, and is associated with the first MDS axis.
With the distance based on adaptive gPCA, we see a similar pattern as in unweighted UniFrac: clustering by subject, an offset within each subject associated with the abx/no abx condition in a consistent direction for each subject.
However, in adaptive gPCA, the first MDS axis is associated with subject, while the second is associated with the abx/no abx condition.
With weighted UniFrac and DPCoA we lose the distinct clusters associated with subject.
In each case there is an offset associated with the abx/no abx condition, but the magnitude of the offset is perhaps smaller than that seen with the other distances.

The differences between the representations provided by the five distances are intriguing: it seems that some distances favor an interpretation of the abx effect as being consistent from subject to subject while others favor an interpretation of a subject-specific effect; some distances favor an interpretation of the subjects falling into distinct clusters and others do not.
We can guess at the causes of these differences using the limited knowledge we have about the distances: The Euclidean distance does not use the phylogenetic information, while the others do.
It makes sense that the makeup of a subject's microbiota and the effect of an antibiotic would look more distinct when each taxon is viewed as equally distinct (implicit in the Euclidean distance) than it would when phylogenetic relationships are taken into account.
%% something about smoothing
%If we imagine 
Looking at the local biplot axes for each MDS representation will allow us to sharpen this intuition.

\begin{figure}
  \includegraphics[width=\textwidth]{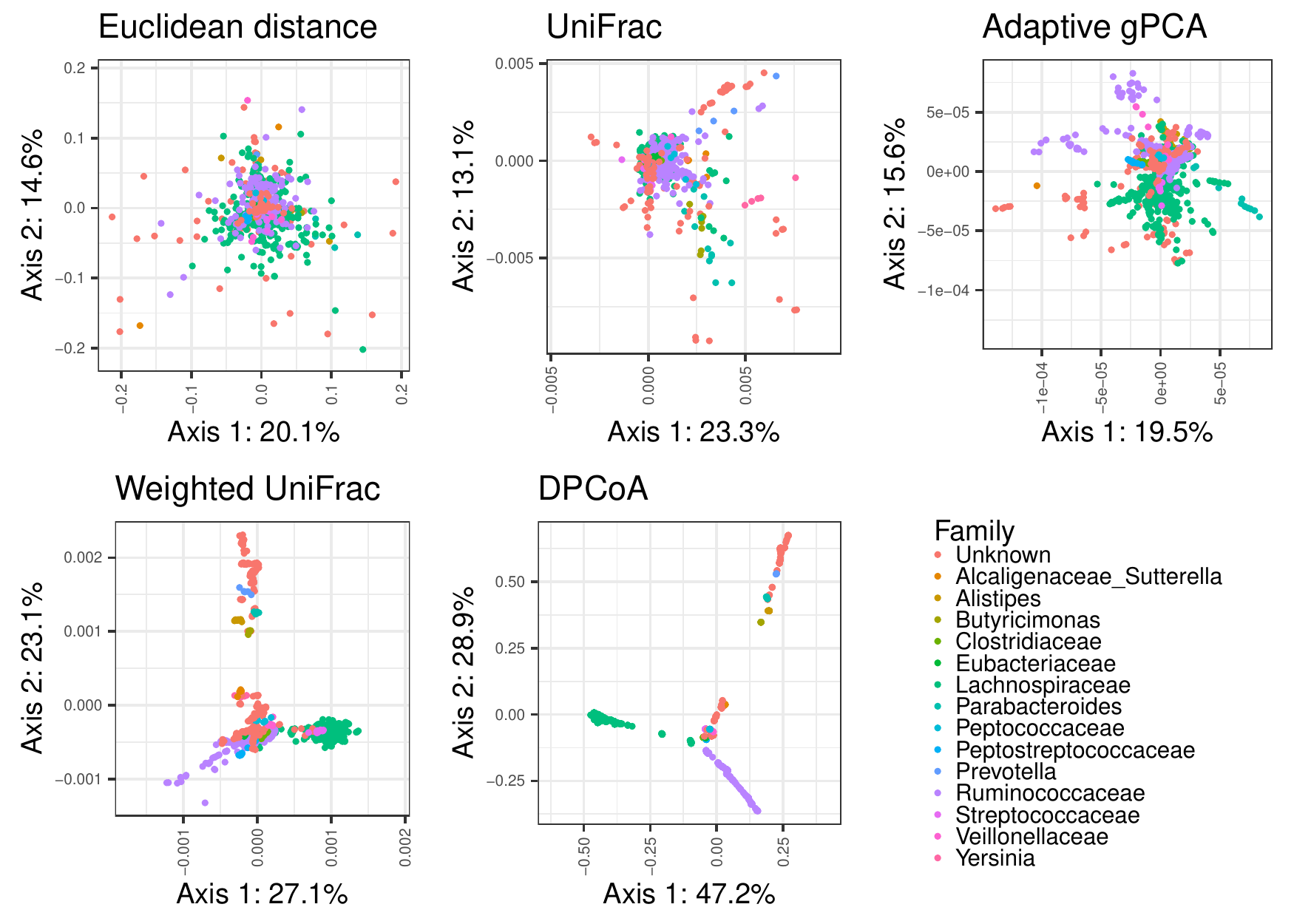}
  \caption{Local biplot axes at one point for MDS of the microbiome data with each of five different distances. From top left: LB axes for PCA/Euclidean distance (LB axes independent of location), UniFrac (LB axes for one of the data points embedding near the center of the plot), adaptive gPCA (a generalized Euclidean distance, so LB axes independent of location), weighted UniFrac (LB axes for one of the sample points embedding near the center of the plot), and DPCoA (a generalized Euclidean distance, so LB axes independent of location).
    Each point corresponds to the LB axis for one variable, which in this case are bacterial taxa.
    Points are colored according to the family the taxon is associated with.
  Note the difference in the relationship between taxonomic family and LB axis values: a strong association for DPCoA and weighted UniFrac, none for the Euclidean distance, and an intermediate association for the agPCA-based distance and unweighted UniFrac.}
  \label{Fig:all-biplots-real}
\end{figure}

The local biplot axes for MDS with the five distances, given in Figure \ref{Fig:all-biplots-real}, show a suggestive pattern.
Each panel corresponds to one of the MDS representations, and for each representation one set of local biplot axes is given.
Each point represents the local biplot axis for one variable, which in this case corresponds to a bacterial taxon.
The colors in the plot correspond to the family the bacterial taxon belongs to.
This information was not used in the construction of either the MDS plot or the local biplot axes, except (for the phylogenetically aware distances) to the extent that taxonomic family and phylogeny align.
The local biplot axes for the three generalized Euclidean distances (the first, third, and fifth panels) do not depend on the position in taxon space, and the axes shown are therefore representative of the whole space.
The local biplot axes for unweighted UniFrac and weighted UniFrac do depend on the position in taxon space.
For reasons of space, only one representative set of axes is shown for each, and in both cases the local biplot axes shown correspond to one of the samples that has an embedding near the origin in the embedding space.
% TODO: add this for real
An interactive {\tt shiny} \citep{shiny} app showing axes at arbitrary sample points is available at \url{https://jfukuyama.shinyapps.io/local-biplot-antibiotic-vis/}.
%The app can be run locally with the code on github, and a link to a version hosted in the cloud is available in the same location.

%% Describe the gradient in relationship between taxonomy/phylogeny and the local biplot axes
The first thing we notice about the local biplot axes for the different MDS representations are differences in the relationship between the local biplot axes and the taxonomy.
We see that for DPCoA and weighted UniFrac, the local biplot axes corresponding to taxa in the same family tend to have very similar values.
At the other extreme, with the Euclidean distance, there is no relationship between local biplot axis and family.
The local biplot axes for unweighted UniFrac and adaptive gPCA fall somewhere in the middle, with local biplot axes corresponding to taxa in the same family tending to have similar values, but not to the extent seen in weighted UniFrac or DPCoA.

%% Next: something about how this lets us interpret the MDS plots. Sample positions in the space for the generalized Euclidean distances is a linear combination of the values, with weights given by the LB axes.
Recalling our result in Theorem \ref{Thm:lb-centroid}, we can use the local biplot axes to interpret the MDS space for the Euclidean distance, adaptive gPCA, and DPCoA.
In each case, the location of a sample in the embedding space is a linear combination of the variables, with the weights given by the values of the local biplot axes.
Thus, we see that for the Euclidean distance, the sample embeddings are given by a linear combination of the variables where the weights are unrelated to the taxonomy.
On the other extreme, for DPCoA, the sample embeddings are given by a linear combination of the variables that can be approximately described as (on the first axis) a contrast between Lachnospiraceae and Ruminococcaceae/Unknown and (on the second axis) a contrast between Ruminococcaceae and unknown.
(Since the interesting offset that we see in the DPCoA sample embeddings between the abx/no abx conditions is a upper left vs. bottom right offset, perhaps that is actually the axis of interest and not the the first or second MDS axes.
Along the $y = -x$ line, the embedding of the samples can be described as a contrast between Lachnospiraceae and Ruminococcaceae.
These two families were in fact described in the initial paper as those showing a difference between the two conditions.)
Adaptive gPCA is somewhere in between the Euclidean distance and DPCoA: the sample embeddings can be well approximated as a linear combination that is approximately piecewise constant on small groups of closely related taxa.

Neither weighted UniFrac nor unweighted UniFrac is a generalized Euclidean distance, and the local biplot axes are not constant for either MDS representation.
However, the local biplot axes that we obtain suggest certain conclusions.
The local biplot axes shown for weighted UniFrac are approximately constant and approximately the same as those for DPCoA reflected over the $y$ axis.
This suggests that DPCoA and weighted UniFrac with multi-dimensional scaling are using the same internal representation of the samples, one related to the relative abundances of major divisions in the phylogenetic tree.
This is in line with the similarity of the embeddings given by MDS with weighted UniFrac and DPCoA: in each case, we see little separation of the samples but an offset between the abx/no abx conditions along the $y = x$ and $y = -x$ lines for weighted UniFrac and DPCoA, respectively.

Similarly, the qualitative results in the sample embeddings for MDS with unweighted UniFrac and adaptive gPCA (clustering of subjects and a consistent direction for the offset of the abx vs. no abx condition) mirrors the qualitative relationship between the local biplot axes and the taxonomic family for that pair (some relationship between taxonomic family and the value of the local biplot axis, but not as much as DPCoA/weighted UniFrac).

Overall, we see that the local biplot axes help us interpret the MDS embedding space.
In this particular instance, they suggest that the phylogenetic distances do some implicit smoothing of the data along the phylogenetic tree, and that the amount of smoothing increases as we go from Euclidean distance to unweighted UniFrac to adaptive gPCA to weighted UniFrac to DPCoA.
This is in line with previous work \citep{fukuyama2019emphasis} showing empiricially such a gradient in a larger set of phylogenetically-informed distances.
In addition to providing insight into general properties of the distances, the local biplot axes give us insight into the particular MDS representation at hand: which sets of bacterial taxa are likely to be over- or under-represented in particular samples or particular regions of the embedding space.
This is information that is not traditionally available in MDS but is of great importance to the consumers of such diagrams.

% real data unifrac example: one of the studies analyzed in the original weighted vs. unweighted unifrac paper: https://www.pnas.org/content/pnas/102/31/11070.full.pdf. Sequences are in GenBank, accession numbers DQ014552–DQ015671 for mothers and AY989911–AY993908 for offspring.

\section{Discussion}

% Things I said I wanted to put in the discussion

% Something about how this provides a more complete characterization/diagnostic for the relationships between distances and inner product spaces, or it translates the results that mathematicians have about the relationships and makes it useful.

% What we did: part 1 = theoretical results and their implications
We have introduced local biplot axes as a new tool for unboxing the black box that is multi-dimensional scaling.
We showed that these local biplot axes are a natural extension of PCA biplot axes and that there is a close connection between these axes and generalized principal components.
Our extension of the classic result on the equivalence of classical multi-dimensional scaling with the Euclidean distance and principal components gives us greater insight into the relationship between data space and embedding space for multi-dimensional scaling.
It particular, it suggests that for an arbitrary distance, if the local biplot axes are approximately constant, the distance can be well approximated by a generalized Euclidean distance and that the generalized Euclidean distance providing the approximation is a function of the local biplot axes.
Because generalized Euclidean distances can be understood as the standard Euclidean distance on a linear transformation of the input variables, this helps us interpret what variables or features in the data are important.

Even in the case where the local biplot axes are far from being constant on the input variable space, the local biplot axes allow us to investigate whether there are regions of local linearity (regions in which the local biplot axes are constant or approximately constant), and therefore where we might be able to approximate the distance as a generalized Euclidean distance or the MDS map as a linear map from data space to embedding space.
This gives us much more insight into the relationship between data and embedding space than is available either with MDS on its own (no information) or other proposals for biplots for MDS, which tend to be about a global linear approximation \citep{Satten2017-ds,wang-2019-gmd-biplot,greenacre2017ordination}.
% We saw that the local biplot axes give us some insight into how these particular distances work, loot at how consistent the axes are with the tree, different for unweighted unifrac vs weighted unifrac

% What we did: part 2 = simulated data
% Why this is better than what other people do (compare to correlation biplots)
Our simulated data example showed that local biplots can uncover substantive differences in the relationship between data space and embedding space, even in cases where MDS with two different distances gives the same embedding of the samples.
Our real data example showed that local biplots can suggest scientifically relevant reasons for the differences in the representations of the samples provided by MDS with different distances.
They can also suggest more general properties of the distances used: in this case, we saw that different phylogenetic distances appear to implicitly smooth the variables along the phylogenetic tree to different degrees, as has been suggested in other work \citep{fukuyama2019emphasis}.

% Limitations
% Extensions/future directions
There are several avenues for future work.
Perhaps most importantly, although our results suggest that approximately constant local biplot axes should imply that the distances are approximately of the form of a generalized Euclidean distance, we do not quantify the approximation or indeed how to choose the best generalized Euclidean distance for the approximation.
Similarly, to identify regions of local linearity, we would need a way to quantify the similarity of the local biplot axes in different regions of the space, and it is not immediately obvious what the best measure would be for this task.
Other potential areas for investigation include developing the relationship between distances and probabilistic models through the relationship between generalized Euclidean distances and the probabilistic interpretation of generalized PCA, using the LB axes as a starting point for confidence ellipses for MDS, and extending the ideas here to other types of low-dimensional embeddings.

% ``Science is better now''

\appendix

\section*{Appendix: Proofs}

\begin{proof}[Proof of Theorem \ref{Thm:pca-mds}]
This follows from Theorem \ref{Thm:quad-dist}, since the standard Euclidean distance $d(\mathbf x, \mathbf y)$ is the same as the generalized Euclidean distance $d_{\mathbf I}(\mathbf x, \mathbf y)$, and the singular value decomposition of $\mathbf X$ is the same as the generalized SVD of $(\mathbf X, \mathbf I, \mathbf I)$.
\end{proof}

\begin{proof}[Proof of Theorem \ref{Thm:gpca-existence}]
  This is a straightforward application of the spectral decomposition theorem.
  Let $\mathbf Q^{1/2}$ be a symmetric square root of $\mathbf Q$, and let $\mathbf Q^{-1/2}$ be its inverse.
  The spectral decomposition theorem tells us that we can find $\mathbf {\tilde V}$ and $\mathbf {\tilde \Lambda}$ such that
  \begin{align}
    \mathbf Q^{1/2} \mathbf X^T \mathbf D \mathbf X \mathbf Q^{1/2} = \mathbf {\tilde V} \mathbf {\tilde \Lambda} \mathbf {\tilde V}^T
  \end{align}
  with $\mathbf {\tilde V}^T \mathbf {\tilde V} = \mathbf I_p$, $\lambda_{11} \ge \lambda_{22} \ge \cdots \ge \lambda_{pp} \ge 0$.

  Then define  $\mathbf V := \mathbf Q^{-1/2} \mathbf {\tilde V}$ and $\mathbf \Lambda := \mathbf {\tilde \Lambda}$.
  We see that
  \begin{align}
    \mathbf X^T \mathbf D \mathbf X \mathbf Q \mathbf V &= \mathbf X^T \mathbf D \mathbf X \mathbf Q^{1/2} \mathbf {\tilde V} \\
    &= \mathbf Q^{-1/2} \mathbf {\tilde V} \mathbf{\tilde \Lambda} \mathbf {\tilde V}^T \mathbf {\tilde V} \\
    &= \mathbf V \mathbf {\Lambda}
  \end{align}
  and so $\mathbf V$ satisfies $\mathbf X^T \mathbf D \mathbf X \mathbf Q \mathbf V = \mathbf V \mathbf \Lambda$.

  Substitution also tells us that
  \begin{align}
    \mathbf V^T \mathbf Q \mathbf V &= \mathbf {\tilde V}^T \mathbf Q^{-1/2} \mathbf Q \mathbf Q^{-1/2} \mathbf {\tilde V} = \mathbf I_p
  \end{align}
  and so $\mathbf V$ and $\mathbf \Lambda$ have the required properties.
\end{proof}

\begin{lemma}
  Let $\mathbf X$ have centered columns, and let $d_{\mathbf Q}(\mathbf x, \mathbf y) = \sqrt{(\mathbf x - \mathbf y)^T \mathbf Q (\mathbf x - \mathbf y)}$.
  Let $\mathbf \Delta$, $\mathbf C_n$ be as defined in (\ref{Eq:squared-dist}).
  Then we have $-\frac{1}{2} \mathbf C_n \mathbf \Delta \mathbf C_n^T = \mathbf X \mathbf Q \mathbf X^T$.
  \label{Lem:JDJ}
\end{lemma}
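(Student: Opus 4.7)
The plan is the standard double-centering computation, specialized to a generalized Euclidean distance. First I would expand the squared distance entry-wise:
\begin{align*}
\delta_{ij} = (\mathbf x_i - \mathbf x_j)^T \mathbf Q (\mathbf x_i - \mathbf x_j) = \mathbf x_i^T \mathbf Q \mathbf x_i - 2 \mathbf x_i^T \mathbf Q \mathbf x_j + \mathbf x_j^T \mathbf Q \mathbf x_j.
\end{align*}
Collecting terms over all $i,j$, write $\mathbf q \in \R^n$ for the vector with $q_i = \mathbf x_i^T \mathbf Q \mathbf x_i$, so that in matrix form
\begin{align*}
\mathbf \Delta = \mathbf q \mathbf 1_n^T + \mathbf 1_n \mathbf q^T - 2\, \mathbf X \mathbf Q \mathbf X^T.
\end{align*}

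Next I would apply the centering matrix on both sides. The key observation is that $\mathbf C_n \mathbf 1_n = \mathbf 1_n - \mathbf 1_n (\mathbf 1_n^T \mathbf 1_n)/n = \mathbf 0_n$, and symmetrically $\mathbf 1_n^T \mathbf C_n^T = \mathbf 0_n^T$. Therefore both rank-one terms $\mathbf q \mathbf 1_n^T$ and $\mathbf 1_n \mathbf q^T$ are annihilated when sandwiched between $\mathbf C_n$ and $\mathbf C_n^T$, leaving
\begin{align*}
-\tfrac{1}{2} \mathbf C_n \mathbf \Delta \mathbf C_n^T = \mathbf C_n \mathbf X \mathbf Q \mathbf X^T \mathbf C_n^T.
\end{align*}

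Finally I would use the assumption that $\mathbf X$ has centered columns. This means $\mathbf 1_n^T \mathbf X = \mathbf 0_p^T$, equivalently $\mathbf C_n \mathbf X = \mathbf X - \mathbf 1_n \mathbf 1_n^T \mathbf X / n = \mathbf X$, and by symmetry $\mathbf X^T \mathbf C_n^T = \mathbf X^T$. Substituting gives
\begin{align*}
-\tfrac{1}{2} \mathbf C_n \mathbf \Delta \mathbf C_n^T = \mathbf X \mathbf Q \mathbf X^T,
\end{align*}
as required. There is no genuine obstacle here; the only thing to be careful about is bookkeeping of the two rank-one terms in the decomposition of $\mathbf \Delta$ and verifying that both are killed by double centering, which is exactly where the centered-columns hypothesis and the identity $\mathbf C_n \mathbf 1_n = \mathbf 0_n$ do the work.
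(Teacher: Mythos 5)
Your proof is correct and rests on the same underlying computation as the paper's: expand $\delta_{ij} = \mathbf x_i^T \mathbf Q \mathbf x_i - 2\mathbf x_i^T \mathbf Q \mathbf x_j + \mathbf x_j^T \mathbf Q \mathbf x_j$ and observe that double centering annihilates everything but the cross term. The only difference is presentational: the paper carries this out entry-wise (computing $\sum_i \delta_{ij}$ and $\sum_{i,j}\delta_{ij}$ explicitly and cancelling), whereas you work in matrix form via $\mathbf C_n \mathbf 1_n = \mathbf 0_n$ and $\mathbf C_n \mathbf X = \mathbf X$, which is cleaner and also isolates exactly where the centered-columns hypothesis is used (only in the final step, not in killing the rank-one terms).
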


\begin{proof}[Proof of Lemma \ref{Lem:JDJ}]
  Let $\mathbf x_i = \mathbf X_{i\cdot}^T$ be the column vector containing the $i$th row of $\mathbf X$.
  Let $\delta_{ij}$ denote $d_{\mathbf Q}(\mathbf x_i, \mathbf x_j)^2$, so that $\delta_{ij} = (\mathbf x_i - \mathbf x_j)^T \mathbf Q (\mathbf x_i - \mathbf x_j)= \mathbf x_i^T \mathbf Q \mathbf x_i + \mathbf x_j^T \mathbf Q \mathbf x_j - 2 \mathbf x_i^T \mathbf Q \mathbf x_j$.
  
  First note that
  \begin{align}
    \sum_{i=1}^n \delta_{ij} &= \sum_{i=1}^n \left( \mathbf x_i^T \mathbf Q \mathbf x_i + \mathbf x_j^T \mathbf Q \mathbf x_j - 2 \mathbf x_i^T \mathbf Q \mathbf x_j \right)\\
    &= n \mathbf x_j^T \mathbf Q \mathbf x_j + \sum_{i=1}^n( \mathbf x_i^T \mathbf Q \mathbf x_i ) -2 \mathbf x_j^T \mathbf Q \sum_{i=1}^n \mathbf x_i \\
    &= n \mathbf x_j^T \mathbf Q \mathbf x_j + \sum_{i=1}^n (\mathbf x_i^T \mathbf Q \mathbf x_i)
  \end{align}
  where the last line follows because we have taken $\mathbf X$ to have centered columns.

  We also have
  \begin{align}
    \sum_{j=1}^n \sum_{i=1}^n  \delta_{ij} &= \sum_{j=1}^n (n \mathbf x_j^T \mathbf Q \mathbf x_j + \sum_{i=1}^n \mathbf x_i^T \mathbf Q \mathbf x_i)  \\
    &= n \sum_{j=1}^n \mathbf x_j^T \mathbf Q \mathbf x_j + n \sum_{i=1}^n \mathbf x_i^T \mathbf Q \mathbf x_i
  \end{align}
  
  Then we have
  \begin{align}
    (-\frac{1}{2} \mathbf C_n \mathbf \Delta \mathbf C_n^T)_{ij} &= -\frac{1}{2} (\delta_{ij} - \frac{1}{n} \sum_{i=1}^n \delta_{ij} - \frac{1}{n} \sum_{j=1}^n \delta_{ij} + \frac{1}{n^2} \sum_{i,j} \delta_{ij} )\\
    &= -\frac{1}{2} (-2 \mathbf x_i^T \mathbf Q \mathbf x_j + \mathbf x_i^T \mathbf Q \mathbf x_i + \mathbf x_j^T \mathbf Q \mathbf x_j - \mathbf x_j^T \mathbf Q \mathbf x_j - \frac{1}{n} \sum_{i=1}^n (\mathbf x_i^T \mathbf Q \mathbf x_i) - \mathbf x_i^T \mathbf Q \mathbf x_i - \frac{1}{n}\sum_{j=1}^n (\mathbf x_j^T \mathbf Q \mathbf x_j) + \\
    &\quad \quad \frac{1}{n^2} (n \sum_{j=1}^n \mathbf x_j^T \mathbf Q \mathbf x_j + n \sum_{j=1}^n \mathbf x_i^T \mathbf Q \mathbf x_i) ) \\
    &= -\frac{1}{2} ( - 2 \mathbf x_i^T \mathbf Q \mathbf x_j)\\
    &= \mathbf x_i^T \mathbf Q \mathbf x_j
  \end{align}
  and so $-\frac{1}{2} \mathbf C_n \mathbf \Delta \mathbf C_n^T = \mathbf X \mathbf Q \mathbf X^T$.
\end{proof}

\begin{proof}[Proof of Theorem \ref{Thm:quad-dist}]
  By the definition,% switch this to the initial part of the definition instead of the end
  \begin{align}
LB(\mathbf z) &= \frac{1}{2} \begin{pmatrix}
    \frac{\partial }{\partial z_1} d_{\mathbf x_1}(\mathbf z)^2 & \cdots \frac{\partial}{\partial z_1} d_{\mathbf x_n}(\mathbf z)^2\\
    \vdots &  \vdots \\
    \frac{\partial}{\partial z_p} d_{\mathbf x_1}(\mathbf z)^2 & \cdots \frac{\partial}{\partial z_p} d_{\mathbf x_n}(\mathbf z)^2\\
\end{pmatrix} \mathbf M \mathbf \Lambda^{-1}\\
&= \frac{1}{2}\begin{pmatrix} \nabla d_{\mathbf x_1} (\mathbf z)^2 & \cdots & \nabla d_{\mathbf x_n} (\mathbf z)^2 \end{pmatrix} \mathbf M \mathbf \Lambda^{-1}
  \end{align}
  For generalized Euclidean distances, $d_{\mathbf x}(\mathbf z)^2 = d_{\mathbf Q}(\mathbf x, \mathbf z)^2 = (\mathbf x - \mathbf z)^T \mathbf Q (\mathbf x - \mathbf z)$.
  and so
  \begin{align}
    \nabla d_{\mathbf x}(\mathbf z)^2 = 2 \mathbf Q(\mathbf x - \mathbf z)
  \end{align}
  This gives us
  \begin{align}
    LB(\mathbf z) &= \frac{1}{2} \begin{pmatrix} 2 \mathbf Q (\mathbf x_1 - \mathbf z) & \cdots & 2 \mathbf Q (\mathbf x_n - \mathbf z) \end{pmatrix} \mathbf M \mathbf \Lambda^{-1}\\
    &= \mathbf Q (\mathbf X^T - \mathbf x \mathbf 1_n^T) \mathbf M \mathbf \Lambda^{-1} \\
    &= \mathbf Q \mathbf X^T \mathbf M \mathbf \Lambda^{-1} \\
    &= \mathbf Q \mathbf X^T \mathbf B \mathbf \Lambda^{-1/2}
  \end{align}
  where the second-to-last line follows because the matrix $\mathbf M$ has centered columns and the last line is a result of $\mathbf M = \mathbf B \mathbf \Lambda^{1/2}$.

  At this point, we see that the biplot axes have no dependence on the point $\mathbf x$.
  All that remains is to show the relationship to the gPCA axes.

  Let $\mathbf V = \mathbf X^T \mathbf B \mathbf \Lambda^{-1/2}$
  We have
  \begin{align}
    \mathbf X^T \mathbf X \mathbf Q \mathbf V &= \mathbf X^T \mathbf X \mathbf Q \mathbf X^T \mathbf B \mathbf \Lambda^{-1/2} \\
    &= \mathbf X^T (\mathbf B \mathbf \Lambda \mathbf B^T) \mathbf B \mathbf \Lambda^{-1/2} \\
    &= \mathbf X^T \mathbf B \mathbf \Lambda^{1/2} \\
    &= \mathbf V \mathbf \Lambda
  \end{align}
  and
  \begin{align}
    \mathbf V^T \mathbf Q \mathbf V &= (\mathbf X^T \mathbf B \mathbf \Lambda^{-1/2})^T \mathbf Q (\mathbf X^T \mathbf B \mathbf \Lambda^{-1/2})\\
    &= \mathbf \Lambda^{-1/2} \mathbf B^T \mathbf X \mathbf Q \mathbf X^T \mathbf B \mathbf \Lambda^{-1/2} \\
    &= \mathbf \Lambda^{-1/2} \mathbf B^T \mathbf B \mathbf \Lambda \mathbf \Lambda^{-1/2} \\
    &= \mathbf I
  \end{align}
  Therefore, $\mathbf V = \mathbf X^T \mathbf B \mathbf \Lambda^{-1/2}$ satisfies the conditions to be the normalized principal axes in gPCA of the triple $(\mathbf X, \mathbf Q, \mathbf I)$.
  If we substitute this equivalence in above, we see that
  \begin{align}
    LB(\mathbf z) &= \mathbf Q \mathbf V,
  \end{align}
  as desired
\end{proof}

\begin{proof}[Proof of Theorem \ref{Thm:constant-axes-on-input}]
  Let $f: \R^p \to \R^p$ be the map from data space to the full-dimensional embedding space.
  By our first assumption (that the MDS embedding space is of dimension $p$), $d_{\mathbf I_p}(f(\mathbf x_i), f(\mathbf x_j)) = d(\mathbf x_i, \mathbf x_j)$.
  By the second assumption (no additional dimensions required to embed $\mathbf z \in \R^p$), we have $d_{\mathbf I_p}(f(\mathbf x_i), f(\mathbf z)) = d(\mathbf x_i, \mathbf z)$.

  By definition of local biplot axes, the Jacobian of $f$ is $J_f(\mathbf z) = \mathbf L^T$, and if $f$ is the map from data space to embedding space defined in (\ref{Eq:mds-map}), we must have $f(\mathbf z) = \mathbf L^T \mathbf z + \mathbf k$ for some $\mathbf k \in \mathbf R^{p}$.
  Then for any $\mathbf x_i$, $i = 1,\ldots, n$, $\mathbf z \in \R^p$, we have
  \begin{align}
    d(\mathbf x_i, \mathbf z) &= d_{\mathbf I_p}(f(\mathbf x_i), f(\mathbf z)) \\
    &= d_{\mathbf I_p}(\mathbf L^T \mathbf x_i + \mathbf k, \mathbf L^T \mathbf z + \mathbf k) \\
    &= \sqrt{(\mathbf x_i - \mathbf z)^T \mathbf L \mathbf L^T (\mathbf x_i - \mathbf z) }\\
    &= d_{\mathbf L \mathbf L^T}(\mathbf x_i, \mathbf z),
  \end{align}
  as desired.  
\end{proof}

\begin{proof}[Proof of Theorem \ref{Thm:constant-axes-with-dist-assumptions}]
  Proof is by induction.
  We will show that for any $k = 1,\ldots, n$, if
  \begin{align}
    d(\mathbf z_1, \mathbf z_2)= d_{\mathbf L \mathbf L^T}(\mathbf z_1, \mathbf z_2) \text{ if } \mathbf z_1, \mathbf z_2 \in \text{span}(\mathbf x_1,\ldots, \mathbf x_k)\label{Eq:ind-k}
  \end{align}
  
  Base case: We show \eqref{Eq:ind-k} for $k = 1$.
  If $\mathbf z_i \in \text{span}(\mathbf x_1)$, then $\mathbf z_i = \alpha_i \mathbf x_1$, $\alpha_i \in \R$.
  \begin{align}
    d(\mathbf z_1, \mathbf z_2) &= d(\alpha_1 \mathbf x_1, \alpha_2 \mathbf x_1) \\
    &= (\alpha_1 - \alpha_2) d(\mathbf x_1, \mathbf 0_p) \\
    &= (\alpha_1 - \alpha_2)d_{\mathbf L \mathbf L^T}(\mathbf x_1, \mathbf 0_p) \\
    &= d_{\mathbf L \mathbf L^T} ((\alpha_1 - \alpha_2) \mathbf x_1, \mathbf 0_p)\\
    &= d_{\mathbf L \mathbf L^T} (\mathbf z_1, \mathbf z_2),
  \end{align}
  where the second line follows by translation invariance and homogeneity of $d$, the third line follows from Theorem \ref{Thm:constant-axes-on-input}, and the remainder is translation invariance, homogeneity, and the definition of $\mathbf z_i$.

  Induction step: Suppose that \eqref{Eq:ind-k} holds for some $k \in \{1,\ldots, n-1\}$, and let $\mathbf z_i, \mathbf z_j \in \text{span}(\mathbf x_1,\ldots, \mathbf x_{k+1})$.
  We can write $\mathbf z_i = \mathbf z_i^k + \alpha_i \mathbf x_{k+1}$ for $\alpha_i \in \R$, $\mathbf z_i^k \in \text{span}(\mathbf x_1, \ldots, \mathbf x_k)$, $i = 1,2$.
  If $\alpha_1 = \alpha_2 = 0$, then $\mathbf z_i, \mathbf z_j \in \text{span}(\mathbf x_1, \ldots, \mathbf x_k)$, and the result holds by the assumption \eqref{Eq:ind-k}.
  If either $\alpha_1$ or $\alpha_2 \ne 0$, then
  \begin{align}
    d(\mathbf z_1, \mathbf z_2) &= d(\mathbf z_1^k + \alpha_1 \mathbf x_{k+1}, \mathbf z_2^k + \alpha_2 \mathbf x_{k+1}) \\
    &= \begin{cases}
      (\alpha_2 - \alpha_1) d((\alpha_2 - \alpha_1)^{-1} (\mathbf z_1^k - \mathbf z_2^k), \mathbf x_{k+1}) & \alpha_2 -\alpha_1 \ne 0 \\
      d(\mathbf z_1^k, \mathbf z_2^k) & \alpha_1 = \alpha_2
      \end{cases}
  \end{align}
  In the case $\alpha_1 = \alpha_2$, we have
  \begin{align}
    d(\mathbf z_1^k, \mathbf z_2^k) &= d_{\mathbf L \mathbf L^T}(\mathbf z_1^k, \mathbf z_2^k) \\
    &= d_{\mathbf L \mathbf L^T}(\mathbf z_1^k + \alpha_1 \mathbf x_{k+1}, \mathbf z_1^k + \alpha_2 \mathbf x_{k+1}) \\
    &= d_{\mathbf L \mathbf L^T}(\mathbf z_1, \mathbf z_2),
  \end{align}
  where the first line follows by \eqref{Eq:ind-k}, the second by translation invariance, and the third by the definition of $\mathbf z_i$.

  Otherwise, we have
  \begin{align}
    (\alpha_2 - \alpha_1) d((\alpha_2 - \alpha_1)^{-1} (\mathbf z_1^k - \mathbf z_2^k), \mathbf x_{k+1}) &=     (\alpha_2 - \alpha_1) d_{\mathbf L \mathbf L^T}((\alpha_2 - \alpha_1)^{-1} (\mathbf z_1^k - \mathbf z_2^k), \mathbf x_{k+1}) \\
    &= d_{\mathbf L \mathbf L^T}(\mathbf z_1, \mathbf z_2)
  \end{align}
  where the first line follows from Theorem \ref{Thm:constant-axes-on-input} and the second line is algebra.
  Thus \eqref{Eq:ind-k} holds for $k$, (\ref{Eq:ind-k}) holds for $k + 1$.

  Finally, since $\text{span}(\mathbf x_1,\ldots, \mathbf x_n) = \R^p$ by assumption, the case $k = n$ implies that for any $\mathbf z_1, \mathbf z_2 \in \R^p$, $d(\mathbf z_1, \mathbf z_2) = d_{\mathbf L \mathbf L^T}(\mathbf z_1, \mathbf z_2)$ as desired.

\end{proof}

\begin{proof}[Proof of Theorem \ref{Thm:gower-centroid}]
  Let $\mathbf z = \sum_{j=1}^p \alpha_j \mathbf e_j$ be the supplemental point.
  We can rewrite the distances between $\mathbf x_i$ and $\alpha_j \mathbf e_j$ in terms of their components:
  \begin{align}
    d(\mathbf x_i, \alpha_j \mathbf e_j)^2 &=   h(x_{ij}, \alpha_j) - h(x_{ij}, 0) + \sum_{k=1}^p h(x_{ik}, 0) \\
    \sum_{j=1}^p d(\mathbf x_i, \alpha_j \mathbf e_j)^2 &= \sum_{j=1}^p h(x_{ij}, \alpha_j) - \sum_{j=1}^p h(x_{ij}, 0) + p \sum_{k=1}^p h(x_{ik}, 0)\\
    &= \sum_{j=1}^p h(x_{ij}, \alpha_j) + (p - 1) \sum_{k=1}^p h(x_{ik}, 0)
  \end{align}
  This allows us to rewrite the distance between the samples and the supplemental points in terms of the distances between $\mathbf x_i$ and $\alpha_j \mathbf e_j$.
  \begin{align}
    d(\mathbf x_i, \mathbf z)^2 &= \sum_{j=1}^p h(x_{ij}, \alpha_j) \\
    &= \sum_{j=1}^p d(\mathbf x_i, \alpha_j \mathbf e_j)^2 - (p-1)\sum_{k=1}^p h(x_{ik}, 0) \\
    &= \sum_{j=1}^p d(\mathbf x_i, \alpha_j \mathbf e_j)^2 - (p-1) d(\mathbf x_i, \mathbf 0_p)^2
  \end{align}
  Finally, let $\mathbf a \in \R^n$ be defined as in (\ref{Eq:a-def}): $a_i =  (-\frac{1}{2} \mathbf C_n \mathbf \Delta \mathbf C_n^T)_{ii} - d(\mathbf x_i, \mathbf z)^2$.
  Then we can rewrite $\mathbf a$ in terms of distances to $\alpha_j \mathbf e_j$ and distances to $\mathbf 0_p$:
  \begin{align}
    a_i &= (-\frac{1}{2} \mathbf C_n \mathbf \Delta \mathbf C_n^T)_{ii} - d(\mathbf x_i, \mathbf z)^2 \\    
    &= (-\frac{1}{2} \mathbf C_n \mathbf \Delta \mathbf C_n^T)_{ii} - \sum_{j=1}^p d(\mathbf x_i, \alpha_j \mathbf e_j)^2 - (p-1)d(\mathbf x_i, \mathbf 0_p)^2\\
    &= \left[\sum_{j=1}^p ( (-\frac{1}{2} \mathbf C_n \mathbf \Delta \mathbf C_n^T)_{ii} - d(\mathbf x_i, \alpha_j \mathbf e_j)^2)\right] - (p-1)\left[ (-\frac{1}{2} \mathbf C_n \mathbf \Delta \mathbf C_n^T)_{ii} - d(\mathbf x_i, \mathbf 0_p)^2\right]
  \end{align}
  Which finally allows us to write the map of a supplemental point to the embedding space in terms of the embeddings of $\alpha_j \mathbf e_j$ and the embedding $\mathbf 0_p$:
  \begin{align}
    f(\mathbf z) &= \frac{1}{2} \mathbf \Lambda_{1:k,1:k}^{-1/2} \mathbf M^T \mathbf a \\
    &= \frac{1}{2} \mathbf \Lambda_{1:k,1:k}^{-1/2} \mathbf M^T \left[\sum_{j=1}^p ( (-\frac{1}{2} \mathbf C_n \mathbf \Delta \mathbf C_n^T)_{ii} - d(\mathbf x_i, \alpha_j \mathbf e_j)^2)\right] - \\
    &\quad\quad \frac{p-1}{2} \mathbf \Lambda_{1:k,1:k}^{-1/2} \mathbf M^T \left[ (-\frac{1}{2} \mathbf C_n \mathbf \Delta \mathbf C_n^T)_{ii} - d(\mathbf x_i, \mathbf 0_p)^2\right]\\
    &= \sum_{j=1}^p \frac{1}{2} \mathbf \Lambda_{1:k,1:k}^{-1/2} \mathbf M^T \left[ (-\frac{1}{2} \mathbf C_n \mathbf \Delta \mathbf C_n^T)_{ii} - d(\mathbf x_i, \alpha_j \mathbf e_j)^2)\right] - (p-1) f(\mathbf 0_p)\\
    &= \sum_{j=1}^p f(\alpha_j \mathbf e_j) - (p-1)f(\mathbf 0_p)
  \end{align}
  Since $\mathbf c = \frac{1}{p} \sum_{j=1}^p f(\alpha_j \mathbf e_j)$, this implies that $f(\mathbf z) = p \mathbf c - (p-1)f(\mathbf 0_p)$, as desired.
\end{proof}

\begin{proof}[Proof of Theorem \ref{Thm:lb-centroid}]
  The result follows if we can show that $f(\mathbf z) = \mathbf V^T \mathbf Q \mathbf z$, because in that case we will have
  \begin{align}
    f(\mathbf z) = \mathbf V^T \mathbf Q \mathbf z =  \mathbf V^T \mathbf Q \sum_{j=1}^p \alpha_j \mathbf e_j = \sum_{j=1}^p \alpha_j (\mathbf Q \mathbf V)_{\cdot, j}^T
  \end{align}

  Since $J_f = \mathbf Q \mathbf V$ (by Theorem \ref{Thm:quad-dist} and the definition of local biplot axes), we know that $f(\mathbf z) = \mathbf V^T \mathbf Q \mathbf z + \mathbf k$, and we just need to show that $\mathbf k = \mathbf 0_k$.

  Note that
  \begin{align}
    \sum_{i=1}^n f(\mathbf x_i) &= \sum_{i=1}^n \mathbf V^T \mathbf Q \mathbf x_i + n \mathbf k = \mathbf V^T \mathbf Q \mathbf X^T  \mathbf 1_n + n \mathbf k = n \mathbf k \label{Eq:k}
  \end{align}
  because $\mathbf X$ has centered columns.
  We also have
  \begin{align}
    \sum_{i=1}^n f(\mathbf x_i) &=  \begin{pmatrix} f(\mathbf x_1) & \cdots & f(\mathbf x_n) \end{pmatrix}\mathbf 1_n \\
    &= \mathbf \Lambda^{1/2}_{1:k, 1:k} (\mathbf B_{\cdot, 1:k})^T \mathbf 1_n
  \end{align}
  because the embeddings $f(\mathbf x_1), \ldots, f(\mathbf x_n)$ in the first $k$ dimensions are given in the rows of $\mathbf B_{1:k} \mathbf \Lambda_{1:k,1:k}^{1/2}$.
  Then since $\|\mathbf \Lambda^{1/2} \mathbf B^T \mathbf 1_n \|_2^2 = \mathbf 1^T \mathbf B \mathbf \Lambda \mathbf B^T \mathbf 1_n= \mathbf 1^T (-\frac{1}{2} \mathbf C_n \mathbf \Delta \mathbf C_n^T)\mathbf 1 = 0$ (Equation \ref{Eq:JDJ} and the definition of $\mathbf C_n$), $ \mathbf \Lambda\mathbf B^T \mathbf 1_n = \mathbf 0$, and so $\sum_{i=1}^n f(\mathbf x_i) = \mathbf 0_k$. Combined with (\ref{Eq:k}), we have $\mathbf k = \mathbf 0_k$, as desired.
\end{proof}

\bibliography{local-biplots}

\end{document}